\documentclass[sigconf, nonacm]{acmart}

\usepackage{enumitem}
\setlist[itemize]{leftmargin=18pt}

\usepackage{thmtools, thm-restate}
\declaretheorem{theorem}

\newtheorem{problem}{Problem}
\newcommand{\secref}[1]{\S\ref{#1}}

\usepackage{mathtools}

\newcommand{\cost}{\operatorname{cost}}

\newcommand{\STS}{\textsc{STS}}

\usepackage{tikz}
\usetikzlibrary{
  arrows.meta,
  positioning,
  calc,
  decorations.pathreplacing,
  shapes.symbols
}

\newcommand{\sparagraph}[1]{\vspace{1mm}\noindent {\bf #1}}

\newcommand\vldbdoi{XX.XX/XXX.XX}
\newcommand\vldbpages{XXX-XXX}

\newcommand\vldbvolume{14}
\newcommand\vldbissue{1}
\newcommand\vldbyear{2020}

\newcommand\vldbauthors{\authors}
\newcommand\vldbtitle{\shorttitle} 

\newcommand\vldbavailabilityurl{https://github.com/Hedi-Chehaidar/optfsst}

\newcommand\vldbpagestyle{plain}

\usepackage[switch]{lineno}

\newcommand{\optfsst}{\textsc{OptFSST}}
\newcommand{\optfsstxx}{\textsc{OptFSST12}}
\newcommand{\fsst}{\textsc{FSST}}
\newcommand{\fsstxx}{\textsc{FSST12}}

 \usepackage{multirow} 
\usepackage{listings}

\usepackage{subcaption}
\usepackage{tcolorbox}
\usepackage{xcolor}
\usepackage[dvipsnames]{xcolor}

\newcommand{\hlstrut}{\rule[-0.15ex]{0pt}{1.4ex}}

\newcommand{\gray}[1]{\colorbox{gray!15}{\hlstrut #1}}

\newcommand{\orange}[1]{\colorbox{BurntOrange!18}{\hlstrut #1}}
\newcommand{\green}[1]{\colorbox{ForestGreen!15}{\hlstrut #1}}
\newcommand{\purple}[1]{\colorbox{Purple!15}{\hlstrut #1}}
\newcommand{\red}[1]{\colorbox{BrickRed!15}{\hlstrut #1}}
\newcommand{\pblue}[1]{\colorbox{ProcessBlue!15}{\hlstrut #1}}
\newcommand{\pink}[1]{\colorbox{Magenta!18}{\hlstrut #1}}

\definecolor{codekeyword}{RGB}{0,0,180}
\definecolor{codecomment}{RGB}{90,90,90}
\definecolor{codestring}{RGB}{160,60,60}
\definecolor{codenumber}{RGB}{120,120,120}
\definecolor{codebackground}{RGB}{248,248,248}
\begin{document}
\title{OptFSST: Optimized FSST String Compression}

\settopmatter{authorsperrow=4}

\author{Hedi Chehaidar}
\affiliation{%
  \institution{Technical University of Munich}
  \country{Germany}
}
\email{hedi.chehaidar@tum.de}

\author{Mihail Stoian}
\orcid{}
\affiliation{%
  \institution{University of Technology Nuremberg}
  \country{Germany}
}
\email{mihail.stoian@utn.de}

\author{Moritz Stargalla}
\orcid{}
\affiliation{%
  \institution{University of Technology Nuremberg}
  \country{Germany}
}
\email{moritz.stargalla@utn.de}

\author{Andreas Kipf}
\orcid{}
\affiliation{%
  \institution{University of Technology Nuremberg}
  \country{Germany}
}
\email{andreas.kipf@utn.de}

\begin{abstract}
Strings account for a substantial fraction of data in modern analytical systems, making lightweight compression with fast random access an important building block for efficient query processing. Fast Static Symbol Table (\fsst{}) addresses this need by replacing frequent byte sequences with compact codes while preserving independent decompression of individual strings. However, \fsst{}'s compression effectiveness is limited by its greedy symbol selection and greedy encoding strategy, leaving encoding gains on the table.

We present \optfsst{}, an optimized \fsst{} variant that improves its compression factors while preserving its static-symbol-table design and random-access decompression. \optfsst{} optimally encodes the text using dynamic programming given a symbol table. Additionally, we show that a generalized version of the symbol-table selection problem is NP-hard when the alphabet is part of the input, motivating heuristic table construction for field-level compressors. Hence, we add in \optfsst{} (i) an additional frequency counter that accelerates the discovery of longer symbols and (ii) a pruning strategy that removes redundant and conflicting symbol candidates during table construction. We also extend the same techniques to \fsstxx{}, yielding \optfsstxx{}.

Our evaluation on 92 real-world string datasets shows that \optfsst{} improves the compression factors of \fsst{} and \fsstxx{} by up to $47.7\%$ and $91.5\%$, with an average improvement of $7.3\%$ and $17.0\%$, respectively, while retaining the fine-grained random-access properties. Notably, \optfsstxx{} improves \fsstxx{}'s decompression speed by $1.2\times$ on average.
\end{abstract}

\maketitle

\pagestyle{\vldbpagestyle}
\begingroup\small\noindent\raggedright\textbf{VLDB Workshop Reference Format:}\\
VLDB 2026 Workshop: 17th International Workshop on Accelerating Analytics and
Data Management Systems Using Modern Processor and Storage
Architectures (ADMS26).
\endgroup
\begingroup
\renewcommand\thefootnote{}\footnote{\noindent
This work is licensed under the Creative Commons BY-NC-ND 4.0 International License. Visit \url{https://creativecommons.org/licenses/by-nc-nd/4.0/} to view a copy of this license. For any use beyond those covered by this license, obtain permission by emailing \href{mailto:info@vldb.org}{info@vldb.org}. Copyright is held by the owner/author(s). Publication rights licensed to the VLDB Endowment. \\
\raggedright Proceedings of the VLDB Endowment. ISSN 2150-8097. \\
}\addtocounter{footnote}{-1}\endgroup

\ifdefempty{\vldbavailabilityurl}{}{
\vspace{.3cm}
\begingroup\small\noindent\raggedright\textbf{PVLDB Artifact Availability:}\\
The source code, data, and/or other artifacts have been made available at \url{\vldbavailabilityurl}.
\endgroup
}

\section{Introduction}
\label{sec:introduction}
 
\begin{figure}[t]
    \centering
    \includegraphics[width=\linewidth]{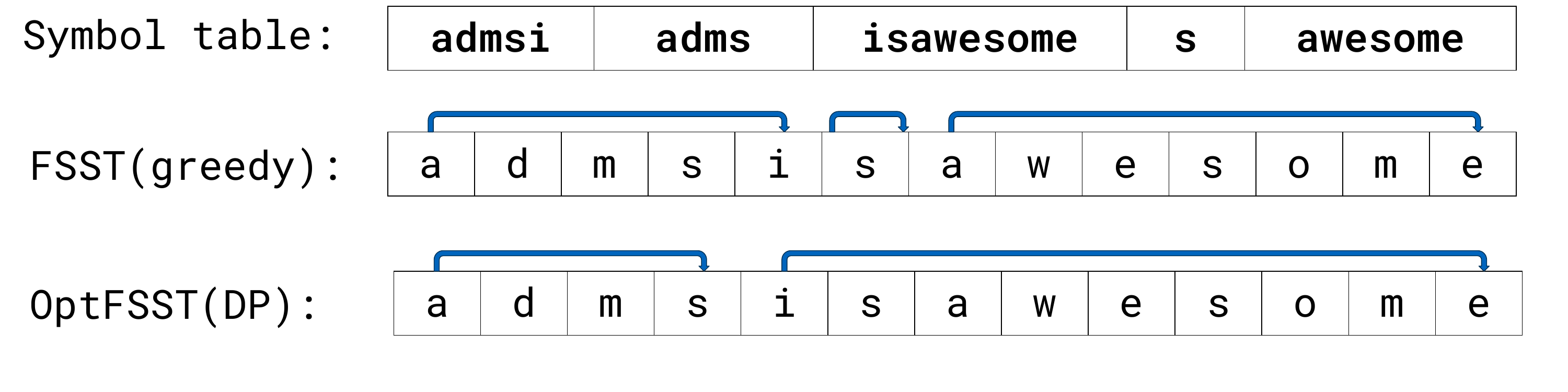}
    \caption{Example where \fsst{}'s greedy longest-match encoding is suboptimal. \optfsst{} yields the optimal compressed size given the symbol table.}
    \label{fig:greedy-dp}
\end{figure}

\sparagraph{Strings in Analytical Workloads.} Strings are a central data type in analytical data management systems. They occur as URLs, identifiers, file paths, log messages, categorical attributes, free-form text, and semi-structured values originating from web, cloud, and enterprise applications. Studies of production analytical workloads and classic column-store systems show that strings can represent a substantial part of stored data and can dominate memory consumption in compressed columnar layouts~\cite{vanrenen2024redshift,boncz1995monet,abadi2008columnstores}. Consequently, effective string compression is important not only for reducing storage size, but also for improving cache behavior, lowering memory-bandwidth pressure, and enabling efficient query processing on compressed data~\cite{zukowski2006superscalar,holloway2007barter,lang2016datablocks}.

This memory-side pressure makes compact in-memory string representations
valuable even when obtaining them requires additional compression-side work,
provided that decompression remains fast and fine-grained.

\sparagraph{Block-Based Compression.} General-purpose compressors such as Zstandard~\cite{facebook2015zstd} and LZ4~\cite{collet2011lz4} are highly effective on large contiguous byte streams, but do not support efficient access to individual strings. This motivates string compressors that preserve fine-grained random access while still reducing memory footprint.

\sparagraph{FSST.} Fast Static Symbol Table (\fsst{}) was designed for this setting~\cite{boncz2020fsst}. It replaces frequent byte sequences with compact codes from a static symbol table, allowing individual strings to be decompressed independently. This
combination of compact strings, fast decompression, and random access makes
FSST attractive for columnar storage formats and database systems. \fsst{}-inspired dictionary compression is used or explored in systems and formats such as BtrBlocks~\cite{kuschewski2023btrblocks}, FastLanes~\cite{afroozeh2025fastlanes}, F3~\cite{zeng2025f3}, Lance~\cite{pace2025lance}, Vortex~\cite{vortex2026}, and DuckDB's \texttt{DICT\_FSST}~\cite{bruineman2025dictfsst} compression.

\sparagraph{Limitations.} The compression effectiveness of \fsst{} depends on the quality of its symbol table and on the encoding decisions made with that table. The original \fsst{} construction relies mainly on heuristics: it repeatedly compresses a sample, counts symbol and pair frequencies, and selects candidates according to local gain estimates. The final encoder then greedily emits the longest matching symbol at the current input position. These choices are fast, but they can be suboptimal. A locally frequent short symbol may block a better sequence of symbols, overlapping candidates can overestimate their true gain, and redundant candidates can occupy entries in the limited symbol table.

\sparagraph{Example.} Figure~\ref{fig:greedy-dp} gives a small example of the limitation caused by a greedy encoding. \fsst{} selects the longest matching symbol at each position, even if this local decision prevents a better global segmentation of the remaining string. In the example, the symbol ``\texttt{admsi}'' is attractive locally, but choosing it leaves the remainder to be encoded using more symbols, resulting in 3 symbols in total. \optfsst{} finds the optimal strategy so as to minimize the compressed size, in this case 2 symbols.

\sparagraph{OptFSST \& OptFSST12.} We present \optfsst{}, an improved \fsst{} variant that targets these limitations while preserving \fsst{}'s byte-oriented format, static-symbol-table design, random-access property, and fast decompression path with a clear improvement in decoding speed in the case of \optfsstxx{}. \optfsst{} improves both encoding and symbol-table construction: it uses dynamic programming to choose globally better encodings for a fixed symbol table, and it refines table construction with longer-symbol frequency counting and conflict-aware pruning. We also apply the same ideas to \fsstxx{}, yielding \optfsstxx{}.

\sparagraph{Contributions.} In summary, we make the following contributions:
\begin{itemize}
  \item[1.] We introduce \optfsst{}, an optimized \fsst{} variant that computes an optimal encoding for a given symbol table using dynamic programming while preserving \fsst{}'s decoder path and random-access property.
  \item[2.] We show that selecting the optimal symbol table is NP-hard,\footnote{When the alphabet is part of the input; we refer the reader to~\secref{sec:np-hardness} for more details.} motivating heuristic table construction, and introduce longer-symbol frequency counting and conflict-aware pruning as a lightweight correction that improves many columns and reduces redundant symbol-table entries.
  \item[3.] We extend the same techniques to \fsstxx{} and evaluate \optfsst{} and \optfsstxx{} on 92 real-world string datasets, improving compression factors by up to $47.7\%$ and $91.5\%$, with average improvements of $7.3\%$ and $17.0\%$, respectively. \optfsstxx{} improves the average decompression speed by $1.2\times$ over \fsstxx{}.
\end{itemize}

\sparagraph{Paper Organization.} The remainder of this paper is organized as follows. In 
\secref{sec:related-work}, we discuss related work on string compression, block-based compression, dictionary compression, and \fsst{} extensions. Then, in \secref{sec:approach}, we present the \optfsst{} design and implementation which combines dynamic-programming-based encoding, longer-symbol frequency counting, and conflict-aware symbol pruning. Notably, to motivate the introduced heuristics, we show in \secref{sec:np-hardness} that finding the optimal symbol table is indeed NP-hard. We evaluate in \secref{sec:evaluation} \optfsst{} and \optfsstxx{} on real-world string datasets and quantify the trade-off between compression factor and compression speed, and conclude in \secref{sec:conclusion}.
\section{Related Work}
\label{sec:related-work}

\sparagraph{Compression in Analytical Systems.} Compression is a core component of modern analytical database systems. Besides reducing storage footprint, compressed representations improve cache locality, reduce memory bandwidth pressure, and can enable query execution directly on compressed data~\cite{zukowski2006superscalar,holloway2007barter,lang2016datablocks,raman2013blu}. Columnar systems in particular benefit from lightweight compression because values of the same type are stored contiguously and often exhibit repeated structure~\cite{abadi2008columnstores}. Much of the classic work on database compression focuses on numeric columns, dictionary encoding, run-length encoding, bit-packing, and other integer-oriented schemes. String data, however, remains more challenging: strings are variable-length, often have high cardinality, and may contain repeated substrings even when full values are unique.

\sparagraph{Block-Based Compression.} General-purpose compressors such as LZ4~\cite{collet2011lz4}, Zstandard~\cite{facebook2015zstd}, and Snappy~\cite{snappy} are widely used in storage and data processing systems because they provide strong throughput-compression trade-offs on large byte streams. All of the mentioned compressors are based on the Lempel-Ziv family of compression techniques~\cite{ziv1977universal}. LZ4 prioritizes speed and uses a simple block-oriented format with greedy match finding, while Zstandard combines LZ-style matching with entropy coding and a tunable compression-speed trade-off. Snappy, originally released by Google in 2011, emphasizes very fast compression and decompression with moderate compression ratios, making it a common choice for latency-sensitive storage and data-processing workloads.

These compressors are highly effective for sequential access and large blocks, but their block-oriented nature conflicts with fine-grained random access. Accessing one string typically requires decompressing the block that contains it, which can be wasteful for scans with selective predicates, joins, aggregations, dictionary lookups, or late materialization. Reducing block sizes improves access granularity, but usually hurts compression effectiveness.

\sparagraph{Dictionary Compression for Strings.} Dictionary encoding is the standard approach for compressing low-cardinality string columns. It stores each distinct string once and replaces column values with integer identifiers. This is highly effective when many values repeat exactly, and it enables efficient query processing over integer codes. However, dictionary encoding does not compress the dictionary strings themselves. For high-cardinality string columns, such as URLs, identifiers, paths, or log messages, many strings are similar but not identical, which limits the effectiveness of pure dictionary encoding. \texttt{DICT\_FSST} addresses this limitation by combining dictionary encoding with \fsst{}: the column is first deduplicated into a dictionary of unique strings, and \fsst{} is then applied to compress the dictionary payload itself~\cite{bruineman2025dictfsst}. This hybrid design exploits both full-string repetition and repeated substrings inside unique dictionary entries, making it a natural integration point for improvements to the underlying \fsst{} compressor.

\sparagraph{FSST.} Fast Static Symbol Table (\fsst{}) was designed specifically for database string columns that require fast decompression and random access~\cite{boncz2020fsst}. \fsst{} builds a static symbol table from a sample of the input and replaces frequent byte sequences of length up to eight with one-byte codes. Bytes that are not represented by a symbol-table entry are emitted as escaped characters: \fsst{} writes a reserved escape code (255) followed by the original byte, so such characters cost two output bytes. 

Since the symbol table is immutable during decompression, each compressed string can be decoded independently. This design allows compressed strings to remain byte strings, simplifying system integration and enabling equality comparisons on compressed data.

The main algorithmic challenge in \fsst{} is constructing a good symbol table. The original algorithm uses an iterative bottom-up procedure: it compresses a sample with the current table, counts symbol and pair frequencies, and constructs the next generation of candidate symbols from high-gain symbols and their concatenations~\cite{boncz2020fsst}. This process addresses some dependencies between overlapping symbols by observing which symbols are actually used during compression. However, \fsst{} still relies on greedy decisions. During encoding, it selects the longest matching symbol at the current position. During table construction, it ranks candidates using local gain estimates. The \fsst{} paper already identifies symbol dependencies and overlapping candidates as central difficulties in symbol-table construction~\cite{boncz2020fsst}. \optfsst{} builds directly on this observation by replacing greedy encoding decisions with a dynamic-programming formulation and by improving candidate selection through additional frequency counting and pruning.

\sparagraph{FSST Extensions.} \fsst{}+ is a recent extension that targets a different source of redundancy: common prefixes across neighboring strings~\cite{alexandre2025fsstplus}. \fsst{}+ groups strings into small blocks, sorts them locally, and uses dynamic programming to select prefix-sharing groups. Shared prefixes are stored once and referenced by suffixes, while the prefix and suffix data are still compressed using \fsst{}. This makes \fsst{}+ complementary to \optfsst{}. \fsst{}+ improves the layout around \fsst{} by exploiting cross-string prefix redundancy, whereas \optfsst{} improves the internal \fsst{} symbol table and encoding decisions. As a result, \optfsst{} can potentially benefit systems that already use \fsst{} as a component, including \texttt{DICT\_FSST} and prefix-oriented extensions such as \fsst{}+.

\sparagraph{GPU-aware FSST.} Recent GPU-enabled query engines have renewed interest in making column encodings and compression schemes accelerator-friendly~\cite{he2022tqp,wu2025blink,coddspeed2026}. In parallel, recent work has shown that GPU-aware encodings and file-format configurations can substantially improve analytical scan performance~\cite{afroozeh2024fastlanes,hapkema2025galp,luo2026gpuparquet}, in particular for string data via a GPU-aware FSST adaptation~\cite{anema2025gpufsst}. \optfsst{}'s DP can also be GPU-accelerated, particularly via NVIDIA Blackwell's recent DPX instructions~\cite{nvidia-blackwell}.

\sparagraph{OnPair.} OnPair is a recent field-level string compressor for in-memory workloads that, like FSST, compresses strings independently and therefore supports fine-grained
random access~\cite{onpair}. In contrast to \fsst{}'s small static
symbol table with one-byte codes and symbols of length up to eight bytes,
OnPair uses a larger dictionary with fixed two-byte token identifiers and builds
the dictionary by incrementally merging frequent adjacent token pairs during a
sequential pass over a sample. Its OnPair16 variant bounds dictionary entries to
16 bytes to make longest-prefix matching and decompression more hardware
friendly. Thus, OnPair explores a different point in the design space: it utilizes pair merging and larger dictionaries to improve compression
effectiveness, whereas \fsst{} prioritizes a compact fixed-size symbol-table, faster
encoding, and simple one-byte symbols.

\sparagraph{Positioning of OptFSST.} \optfsst{} differs from block-based compressors by preserving \fsst{}'s random-access decompression model. It differs from dictionary and front-coding approaches by compressing substrings inside individual string values rather than relying on exact duplicates or lexicographic adjacency. It also differs from \fsst{}+ by optimizing \fsst{}'s symbol selection and encoding process instead of introducing a new prefix-sharing layout. The closest baseline is therefore the original \fsst{} algorithm. \optfsst{} keeps the same high-level abstraction: a static table of byte-sequence symbols, but replaces parts of the greedy construction and encoding pipeline with globally informed decisions.

As a footnote in \fsst{}'s original paper~\cite{boncz2020fsst}, the authors do indeed provide a DP approach as an alternative, but without reporting any performance numbers. This motivated our work, in which we extensively benchmark this approach, optimize the ``severely affected'' encoding, i.e., the inner DP loop, and extend the analysis to \fsstxx{}, which got adopted in CWI's FastLanes~\cite[Table 1]{afroozeh2025fastlanes}.

\section{Approach}
\label{sec:approach}

\optfsst{} improves \fsst{} by modifying only the compression and symbol-table construction pipeline. The decompression format remains unchanged: compressed strings are still byte sequences of symbol codes and escape codes interpreted through a static symbol table. As a result, \optfsst{} preserves \fsst{}'s random-access decompression property while improving compression factors through three contributions: dynamic-programming-based encoding, a third frequency counter, and symbol pruning. We also apply the same ideas to \fsstxx{}, resulting in \optfsstxx{}.

\subsection{Dynamic-Programming-Based Encoding}
\label{subsec:dp-encoding}

\sparagraph{Motivation.} As discussed above, \fsst{}'s longest-match encoder is fast but not optimal for a fixed symbol table. Since each symbol code costs one byte and an escaped
character costs two bytes, fixed-table encoding can be formulated as finding
the minimum-cost segmentation of the input string. \optfsst{} solves this problem
using dynamic programming.

\begin{figure}[t]
    \centering
    \includegraphics[width=\linewidth]{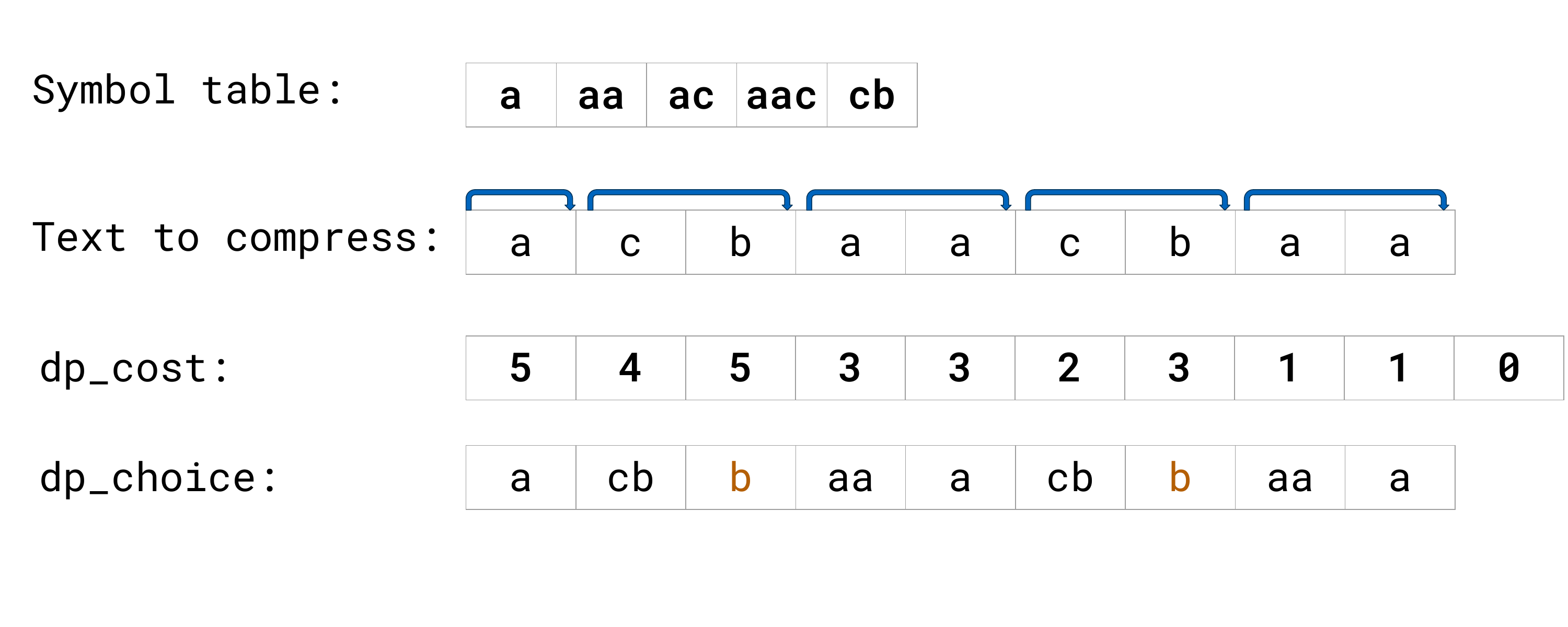}
    \caption{Dynamic-programming-based encoding example. \optfsst{} computes the optimal suffix cost for each input position and follows the selected decisions to obtain a globally better encoding (marked by the arrows) than greedy longest-match selection. Escaped characters are marked in \textcolor{brown}{brown}.}
    \label{fig:dp-compression}
\end{figure}

\sparagraph{Dynamic Programming Formulation.} For an input string $s$ of length $n$, \optfsst{} defines $dp[i]$ as the minimum compressed size of the suffix $s[i:n]$. The base case is $dp[n] = 0$. At each position $i$, the encoder can either escape the current byte, which costs two output bytes, or emit one symbol from the current symbol table $T$ that starts at $i$, which costs one output byte. The recurrence is:
\[
dp[i] =
\min\left(
    2 + dp[i+1],
    1 + \min_{\substack{i < j \leq n \\ s[i:j] \in T}} dp[j]
\right).
\]
In addition to the cost array, \optfsst{} stores an array $\texttt{dp\_choice[i]}$ containing one best decision for position $i$. After filling both arrays from right to left, compression follows the decisions stored in \texttt{dp\_choice}, as shown graphically in Figure~\ref{fig:dp-compression}.

\sparagraph{Trie-based Lookup.} A direct implementation of the recurrence would test all symbols at every input position. \optfsst{} avoids this by storing the current symbol table in a trie. Starting from position $i$, the encoder walks the trie byte by byte for at most the maximum \fsst{} symbol length. If the traversal reaches a terminal node, the represented symbol matches the current input prefix. If no child exists for the next byte, traversal stops immediately. This avoids full-symbol comparisons and limits candidate enumeration to symbols that actually match the input.

\begin{figure}[!t]
\begin{lstlisting}[
  language=C++,
  caption={Core dynamic-programming construction used by \optfsst{}.},
  label={lst:build-dp}
]
void (*@\green{BuildDP}@*)(const uint8_t* data, int n,
             const vector<TrieNode>& trie,
             vector<uint32_t>& dp_cost,
             vector<uint16_t>& dp_choice) {
  dp_cost.assign(n + 1, 0);
  dp_choice.assign(n, 255);  // Default: escape.

  dp_cost[n] = 0;  // Base case.

  for (int i = n - 1; i >= 0; --i) {
    uint32_t best_cost = (*@\red{2 + dp\_cost[i + 1]}@*);
    uint16_t best_code = 255;

    int (*@\pblue{node}@*) = 0;
    uint32_t limit = min(Symbol::maxLength, (uint32_t)(n - i));

    for (uint32_t off = 0; off < limit; ++off) {
      uint8_t byte = data[i + off];
      node = trie[node].child[byte];

      if (node == -1) 
        break;

      int code = trie[node].symbol_code;
      if (code != -1) {
        uint32_t len = off + 1;
        uint32_t cost = 1 + dp_cost[i + len];

        if (cost <= best_cost) {  (*@\label{line:smaller-equal}@*)
          (*@\orange{best\_cost}@*) = cost;
          (*@\purple{best\_code}@*) = code;
        }
      }
    }

    dp_cost[i] = best_cost;
    dp_choice[i] = best_code;
  }
}
\end{lstlisting}
\end{figure}

\sparagraph{Implementation.} Listing~\ref{lst:build-dp} shows the core of the \green{DP} implementation. Each position is initialized with the \red{escape cost}. Then \optfsst{} traverses the \pblue{trie} of symbols starting at the current byte position. Whenever a terminal trie node is reached, the corresponding symbol is a valid candidate and can update the \orange{best cost}. Experiments showed that preferring longer symbols, as a tie-breaker for equal DP costs, yields better overall compression factors; hence the use of ``$\leq$'' in the cost update condition in line~\ref{line:smaller-equal}. The final implementation computes the dynamic-programming cost and the \purple{selected code} in the same pass.

\sparagraph{Use During Table Construction.} \optfsst{} uses the DP encoder both for final compression and during symbol-table construction. \fsst{} builds its symbol table iteratively: in each generation, it compresses a sample using the current table, counts emitted symbols and pairs, and then constructs the next generation from high-gain candidates. Replacing the greedy encoding in this phase changes the frequency signal used for table construction. Counts now reflect symbols that participate in globally good encodings rather than symbols selected only because they are locally longest, improving the feedback loop between compression and symbol selection.

\sparagraph{Traversal Optimizations.} The dynamic-programming step introduces additional compression work, with trie traversal being the main cost. Therefore, \optfsst{} optimizes the \texttt{BuildDP} traversal path using loop unrolling, compiler branch-prediction hints, and small implementation-level refinements. These optimizations do not change the recurrence or the selected encoding; they merely reduce the cost of computing the \texttt{dp\_cost} and \texttt{dp\_choice} arrays. We evaluate the effect of these optimizations in~\secref{subsec:speed-results}.

\sparagraph{Adaptive and Vectorized Variants.} We also evaluated two variants aimed at reducing the DP overhead. First, we tested an adaptive strategy that switches to greedy full-text encoding only when a DP-based compression pass over the training sample exceeds a configurable compressed-size threshold. This did not contribute positively to the compression-factor to compression-speed trade-off compared to always choosing the DP path. 
Second, we tested a SIMD-oriented bucket layout that groups symbols by their
first bytes and vectorizes candidate comparisons within the selected bucket.
In our experiments, this layout was slower than the optimized trie traversal,
because bucket lookup and candidate filtering dominated the benefit of
vectorized comparisons. We therefore retain the trie-based DP implementation.

\subsection{Faster Convergence}
\label{subsec:third-counter}

\sparagraph{Motivation.} \fsst{}'s original construction counts individual symbols through an array \texttt{count1}, and adjacent symbol pairs with a 2D array \texttt{count2}. Pair counts allow the next generation to form longer candidates by concatenating two symbols that occur next to each other. However, longer recurring patterns may require multiple generations before they become visible as candidates. This slows convergence and can prevent useful long symbols from entering the table early enough.

\sparagraph{Counting Triples.} \optfsst{} introduces a third frequency counter, \texttt{count3}, for triples of consecutive emitted symbols.\footnote{The motivation is that a Markov model, as (implicitly) used in the pair counting of FSST (\texttt{count2}), can better capture the distribution when conditioned on a longer context; in this case, of length two.} During sample compression, \optfsst{} records patterns of the form $(a,b,c)$ in addition to the original single-symbol and pair counts. Candidate generation can then concatenate three adjacent symbols directly, as long as the resulting symbol respects the maximum symbol length. This exposes longer recurring patterns earlier than pairwise growth alone. We also experimented with extending this idea to higher-order counters, beginning from \texttt{count4}, but they did not yield significant additional improvements in our experiments.

\sparagraph{Compact Representation.} A dense three-dimensional counter would be too large. Since only triples that actually occur in the compressed sample are relevant, \optfsst{} stores \texttt{count3} sparsely in a hash map. Each \fsst{} code fits into nine bits, so three codes can be packed into a 32-bit integer.

\sparagraph{Candidate Generation.} During the new symbol table population of each generation, \optfsst{} considers the original \fsst{} candidates from \texttt{count1} and \texttt{count2}, and additionally iterates over the non-zero entries of \texttt{count3}. For every observed triple, the three corresponding symbols are concatenated and inserted into the candidate heap if the resulting symbol is short enough. The gain is computed in the same style as \fsst{}, multiplying the candidate length by the observed frequency. Thus, \optfsst{} preserves \fsst{}'s generation-based construction, but extends the candidate space with longer combinations discovered by the third counter.

\subsection{Symbol Pruning}
\label{subsec:symbol-pruning}

\begin{figure}[t]
    \centering
    \includegraphics[width=\linewidth]{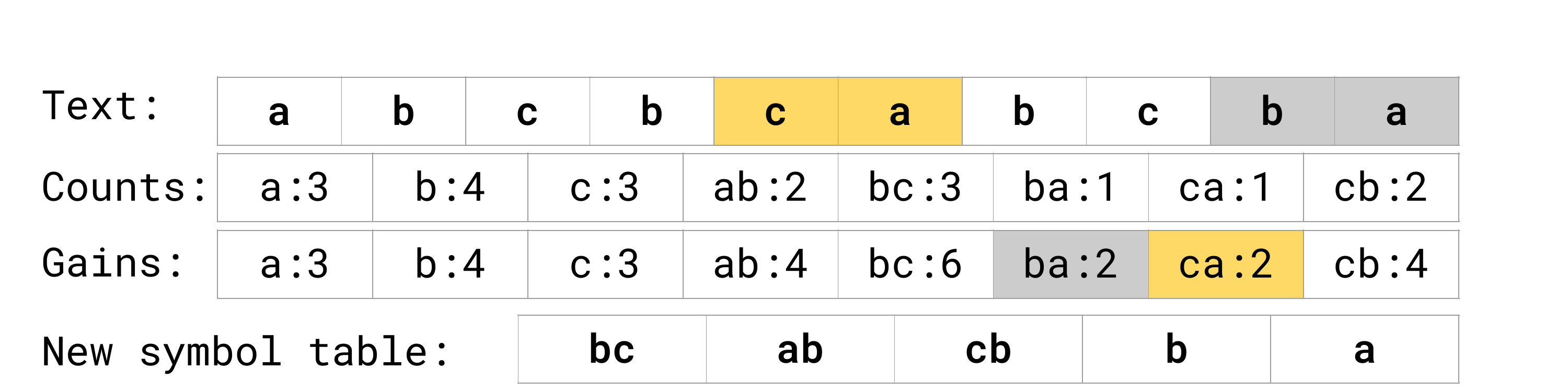}
    \caption{ Example of symbol selection without pruning. The symbols ``ba`` and ``ca`` are not chosen, because symbols ``a`` and ``b`` had higher static gains. \optfsst{} handles this case optimally.}
    \label{fig:symbol-pruning}
\end{figure}

\sparagraph{Motivation.} \fsst{} ranks candidate symbols by static gain, which is based on frequency and symbol length. This estimate can be inaccurate when symbols overlap. If a larger symbol is selected, smaller symbols contained in it may no longer be emitted often in the next generation. Without correction, these smaller symbols can still appear valuable and occupy entries in the limited symbol table. \optfsst{} introduces pruning to partially correct this overestimation during candidate selection.

\sparagraph{Example w/o Pruning.} Figure~\ref{fig:symbol-pruning} illustrates the issue for a table with capacity five. Static gain selects the high-count one-byte symbols ``a'' and
``b'', although replacing them with the two-byte symbols ``ba'' and ``ca''
would cover the text more effectively. The example shows that gains computed
before candidate selection can become stale once overlapping longer symbols
enter the table.

\sparagraph{Pruning Logic.} When \optfsst{} selects a concatenated symbol, it reduces the counts of the smaller symbols and sub-combinations used to form it. For example, if a selected candidate corresponds to the concatenation $XYZ$, the counts of $X$, $Y$, $Z$, $XY$, and $YZ$ are reduced. Their gains are then recomputed before they can be selected. Candidates whose corrected gain becomes non-positive are discarded. This gives other candidates the opportunity to enter the table and reduces redundancy among selected symbols.

Returning to the example of Figure~\ref{fig:symbol-pruning}, \optfsst{} would reduce the counts of the symbols ``b'' and ``c'' by the count of ``bc'', which is 3, right after ``bc'' is added to the new symbol table. In this way, the symbol ``c'' will be immediately discarded as its count reached 0, but the symbol ``b'' still remains with the count 1 and gets reinserted into the candidate heap. The symbol ``b'' will be discarded after adding ``ab'' and the symbol ``a'' will be discarded after adding the fourth symbol. Then we would have all 2-byte symbols in the new symbol table, which would compress the given text optimally.   

\sparagraph{Implementation.} Listing~\ref{lst:pruning} shows a simplified version of the pruning step. \orange{Candidate} gains stored in the heap may become stale after earlier pruning operations, so every popped candidate is validated against its current \pblue{corrected count} after being \purple{reconstructed} from its individual parts through concatenation. If the candidate is still valid, it is inserted into the next symbol table and its parts are \green{pruned}.

\begin{figure}[!t]
\begin{lstlisting}[
  language=C++,
  caption={Simplified candidate selection with pruning.},
  label={lst:pruning}
]
struct (*@\orange{Candidate}@*) {
  uint16_t a;
  uint16_t b;
  uint16_t c;
  uint8_t arity;  // 1, 2, or 3.
  uint32_t gain;
};

while (next_table.size() < max_symbols && !heap.empty()) {
  Candidate (*@\orange{candidate}@*) = heap.PopMax();

  Symbol symbol = (*@\purple{Reconstruct}@*)(candidate);
  uint32_t current_count = (*@\pblue{CorrectedCount}@*)(candidate);
  uint32_t current_gain = current_count * symbol.length();

  // Ignore stale heap entries.
  if (current_gain != candidate.gain || current_gain == 0) 
    continue;

  if (next_table.Contains(symbol)) 
    continue;

  next_table.Insert(symbol);

  // Correct counts of covered smaller candidates.
  if (candidate.arity >= 2) {
    (*@\green{PruneSymbol}@*)(candidate.a, current_count);
    PruneSymbol(candidate.b, current_count);
  }

  if (candidate.arity == 3) {
    PruneSymbol(candidate.c, current_count);
    (*@\green{PrunePair}@*)(candidate.a, candidate.b, current_count);
    PrunePair(candidate.b, candidate.c, current_count);
  }
}
\end{lstlisting}
\end{figure}

\begin{figure*}[t]
    \centering

    \begin{subfigure}[t]{0.49\textwidth}
        \centering
        \includegraphics[width=\linewidth]{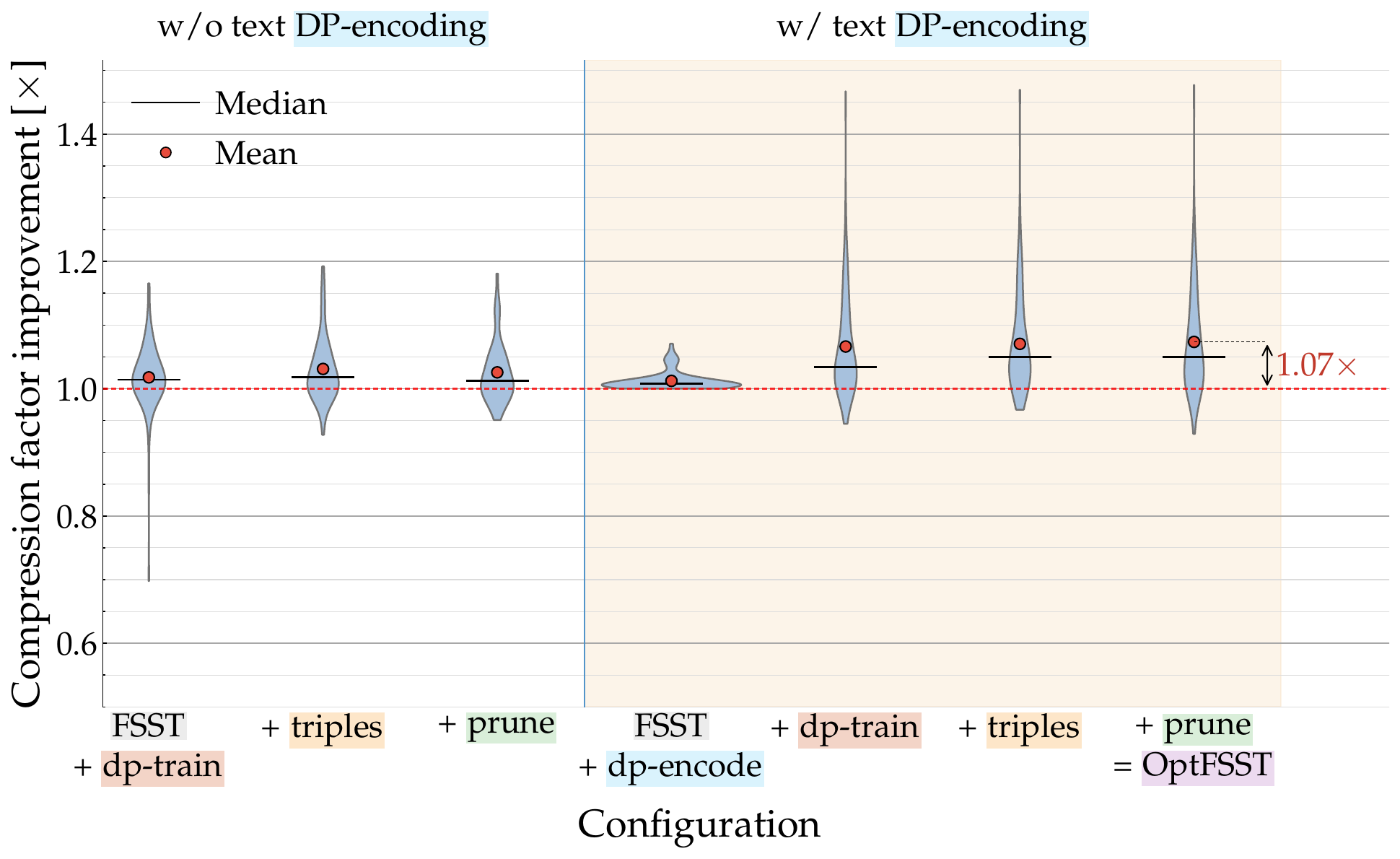}
        \caption{\optfsst{}}
        \label{fig:cf-improvement-optfsst}
    \end{subfigure}
    \hfill
    \begin{subfigure}[t]{0.49\textwidth}
        \centering
        \includegraphics[width=\linewidth]{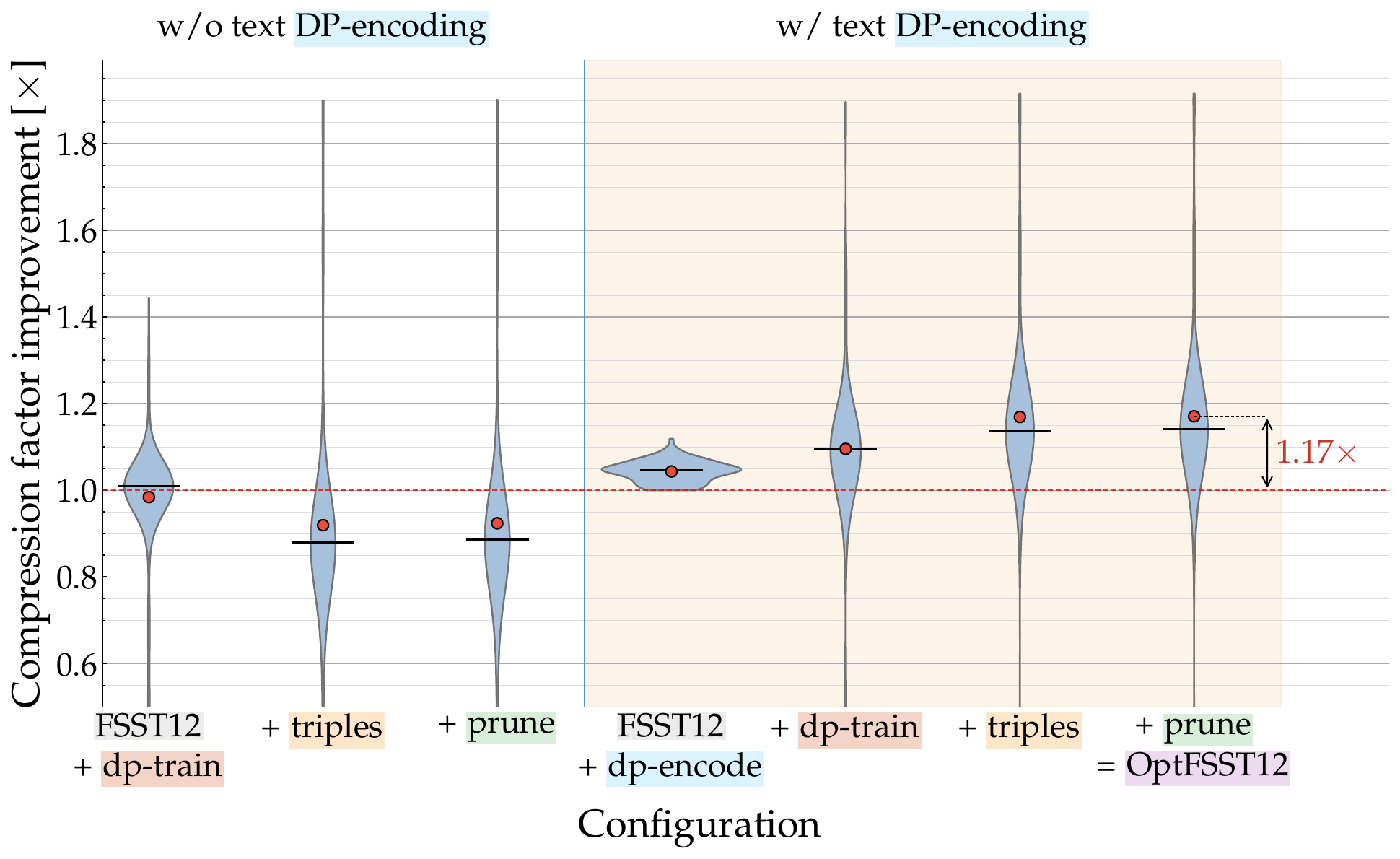}
        \caption{\optfsstxx{}}
        \label{fig:cf-improvement-optfsst12}
    \end{subfigure}

    \caption{Compression-factor improvement over the corresponding baseline. The dashed red line marks parity with \fsst{} or \fsstxx{}. Each box shows the distribution across benchmark columns, and the red point denotes the mean.}
    \label{fig:cf-improvement}
\end{figure*}

\subsection{OptFSST12}
\label{subsec:optfsst12}

We also apply the same three contributions to \fsstxx{}, resulting in \optfsstxx{}. \fsstxx{} provides a larger symbol space and uses a different symbol storage design, but the main optimization problem remains the same: the compressor must decide which symbols to place in the table and how to encode strings with them. \optfsstxx{} therefore uses the dynamic-programming encoder, the third frequency counter, and pruning in the \fsstxx{} setting.

\sparagraph{Preserved Properties.} Both \optfsst{} and \optfsstxx{} preserve the central properties of \fsst{}. The symbol table remains static during decompression, each compressed string can still be decompressed independently, and the decompression path is unchanged. The additional computation is concentrated in compression and table construction, where more informed optimization can improve compression factors. This design deliberately trades additional compression effort for better compressed size while keeping \fsst{}'s fast random-access decompression model intact.

\section{Optimal Symbol Table Selection}
\label{sec:np-hardness}
FSST uses a greedy strategy in both the symbol-table construction and encoding phase. While \optfsst{} guarantees the optimal encoding with respect to a \emph{given} table, finding the symbol table is still suboptimal (even though we are using DP as a subroutine to evaluate the quality of the current table). A natural question is whether the problem is inherently hard, thereby justifying the proposed heuristics, or whether we simply have not yet found an efficient algorithm. In this work, we show the problem of selecting an optimal symbol table is indeed NP-hard.\footnote{Our reduction uses an alphabet whose size grows with the input. If both $\ell$ and the alphabet size are fixed, then the problem is solvable in polynomial time; see Appendix~\ref{appendix:STS_l_with_bounded_alphabet_size_in_P}.
If instead the dictionary size is fixed, the problem is also solvable in polynomial time; compare~\cite[Sec.~4.1]{boncz2020fsst}.}

We formalize the problem in the following, emulating the setting we have in FSST:

\begin{problem}[FSST Symbol Table Selection]
Fix an integer \(\ell \ge 1\). An instance of \(\STS_{\ell}\) consists of:
\begin{itemize}
    \item a finite multiset \(\mathcal{T}\) of strings over an alphabet \(\Sigma\),
    \item an integer \(K \ge 0\), and
    \item a target bound \(B \ge 0\).
\end{itemize}
A dictionary \(D\) is a set of strings over \(\Sigma\) of length at most \(\ell\), with \(|D| \le K\). A \emph{\(D\)-encoding} of a string \(s \in \mathcal{T}\) is a partition of \(s\) into consecutive phrases, each of which is either:
\begin{itemize}
    \item a literal byte \(c \in \Sigma\), encoded at cost \(2\) (escape byte plus \(c\)), or
    \item an occurrence of a dictionary symbol \(w \in D\), encoded at cost \(1\).
\end{itemize}
The cost of an encoding is the sum of phrase costs, and \(\cost_D(s)\) denotes the minimum cost of any \(D\)-parsing of \(s\). Finally,
\[
\cost_D(\mathcal{T}) \;:=\; \sum_{s \in \mathcal{T}} \cost_D(s).
\]

The decision problem asks whether there exists a dictionary \(D\) with \(|D| \le K\) such that
\[
\cost_D(\mathcal{T}) \le B.
\]
\end{problem}

We show that $\STS_\ell$, the problem of selecting symbols of length at most $\ell$ (including escaping single bytes), is NP-hard:

\begin{restatable}{theorem}{MAINTHM}
	STS$_\ell$ is NP-hard for every fixed $\ell\geq 2$.
    \label{thm:np-hardness}
\end{restatable}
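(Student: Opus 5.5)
We reduce from \textsc{Vertex Cover} (NP-complete even on graphs of maximum degree three, though we will not need this). The alphabet will grow with the graph, as anticipated in the footnote. First we settle the case $\ell = 2$; afterwards we lift the construction to an arbitrary fixed $\ell \ge 2$.

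\textbf{Construction.} Let $G=(V,E)$ be a graph and let $k$ be the vertex-cover budget. The alphabet is $\Sigma = \{x_v : v \in V\} \cup \{y_e : e \in E\}$, so every vertex and every edge receives its own fresh letter. Only length-$2$ strings will be worth adding to $D$. Adding a single letter to $D$ saves exactly $1$ per occurrence of that letter.

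\textbf{Gadgets.} The edge gadget is meant to make a two-letter symbol involving a vertex letter useful if and only if that vertex covers the edge. For each edge $e=\{u,v\}$ we put into $\mathcal{T}$ the two strings $x_u y_e$ and $y_e x_v$, with some multiplicity $m$. To make the dictionary choose among vertices, we first force a large set of "mandatory" symbols into $D$. For instance, we can add very many copies of each single letter $y_e$, so that every $y_e$ must be in $D$. With $y_e \in D$, the string $x_u y_e$ costs $3$, or $2$ if $x_u \in D$ or $x_u y_e \in D$.

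This gadget is not yet right. Paying for $x_u$ as a single letter saves on every edge incident to $u$, so it behaves like choosing vertex $u$. However, one could instead take the two-letter symbols $x_u y_e$ edge by edge. We fix this by weighting. For each $v$ we add $M$ extra copies of the string $x_v$, so that including $x_v$ yields a large saving. The budget is $K = |E| + k$, and $B$ is calibrated so that success requires at least $|E| + k$ large savings and one saving per edge.

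The cleanest calibration is the following. Choosing $y_e$ for all $e$ is forced by huge multiplicities. The remaining $k$ slots then go to vertex letters. Each selected vertex letter $x_v$ saves $M$ plus $m$ times its degree. The total saving from the edges counts each edge at most once per selected endpoint. To make the covering condition exact, I would instead make each edge string $x_u y_e x_v$, so that $\cost$ drops by $1$ for the edge as soon as either endpoint is chosen. Because $y_e$ is a separate letter in $D$, the saving per endpoint is additive, which is a problem. Tuning which pieces are in $D$, I would rather pick the edge string $x_u x_v$ with no $y_e$ at all. Then a length-$2$ symbol $x_u x_v$ covers only that one edge, while $x_u \in D$ or $x_v \in D$ saves $1$ on it. In the first case the cost is $3$ instead of $4$, and in the second case $2$ instead of $4$. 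This gives a max-coverage-style saving: with $M$ dominating, exactly $k$ vertex letters are chosen. The edge saving equals the number of covered edges plus the number of doubly covered edges, and that is not exactly the cover size.

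\textbf{Main obstacle.} The hard step is designing the edge gadget so that its saving is $\min(1,\#\text{selected endpoints})$. The first thing I would try is a gadget built from two strings per edge, $x_u z_e$ and $z_e x_v$. Here $z_e$ is a fresh letter that is not forced into $D$, and the two-letter symbols cannot be cheaply reused across edges. I would then prove, by an exchange argument, that any optimal $D$ can be normalized to the form "all forced symbols plus $k$ vertex letters", and check the arithmetic for $B$.

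\textbf{General $\ell$.} For general $\ell$, I would replace each letter by a block padded with fresh letters, so that no symbol longer than the intended ones helps beyond what the reduction already accounts for.
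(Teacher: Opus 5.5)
\textbf{There is a genuine gap, and it is at the step you call the main obstacle.} You never produce an edge gadget whose saving is $\min(1,\text{number of selected endpoints})$, and the gadget you finally propose does not have this property. With $z_e\notin D$ you get $\cost_D(x_u z_e)=4-[x_u\in D]$ and $\cost_D(z_e x_v)=4-[x_v\in D]$. So the edge strings contribute a cost that depends on the selected set $S$ only through $\sum_{v\in S}\deg(v)$. On a cubic graph every $k$-set of vertex letters then gives the same total cost, so the instance cannot tell covers from non-covers. Your planned exchange argument, normalizing $D$ to ``forced symbols plus $k$ vertex letters'', would only confirm this.

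The obstruction is structural. As long as a vertex is selected by putting a \emph{single letter} into $D$, each literal-parsed occurrence of that letter gets cheaper by exactly $1$, regardless of context. Savings are therefore additive over endpoints. The calibration of $B$ is also left undone. Your general-$\ell$ step (padding with fresh letters) is too vague to check, because padding creates new substrings that may themselves be profitable symbols.

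\textbf{The missing idea, which is what the paper does, is to reverse the roles.} Force \emph{all} single letters and the filler $z^{(\ell-1)}$ into $D$, using $m+1$ copy strings of each. Vertices are then selected by length-$\ell$ symbols $x_u z^{(\ell-1)}$ and $z^{(\ell-1)}y_v$, each shared by all edges at that vertex. The edge string $x_u z^{(\ell-1)} y_v$ has length $\ell+1$, longer than any symbol, so it always costs at least $2$. With only the forced symbols it costs $3$. Either endpoint's symbol brings it down to $2$, and both together cannot go lower. This is exactly the capped saving you were looking for, and it works uniformly for every $\ell\ge 2$.

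\textbf{The price is bipartiteness, which changes the source problem.} A vertex must always sit on the same side of its edge strings, so the graph has to be bipartite. On bipartite graphs Vertex Cover is polynomial by K\"onig's theorem, so the paper reduces from the partial version instead: Maximum Vertex Coverage on bipartite graphs (at most $K$ vertices covering at least $q$ edges). It sets $K'=|C|+K$ and $B'=(m+1)|C|+3m-q$. The reverse direction uses two facts. A cost-$2$ edge string must split as $x_u z^{(t)}\cdot z^{(\ell-1-t)}y_v$ with at least one part outside $C$. Each symbol of $D\setminus C$ names at most one vertex.
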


We refer the reader to Appendix~\ref{appendix:np-hardness} for the proof of Theorem~\ref{thm:np-hardness}. Note that, unlike the tokenization problem found in modern language models, the STS$_\ell$ problem fixes the length $\ell$ of a token to be considered; in FSST, a symbol's length is bounded by 8 bytes. Hence, the recent NP-hardness proofs for the tokenization problem, as in Ref.~\cite{token-hard, token-hard-bounded}, are not directly applicable. Our proof thus contributes to the hardness landscape for the tokenization problem.

\section{Evaluation}
\label{sec:evaluation}

This section evaluates \optfsst{} and \optfsstxx{} with respect to compression-factor improvements and the effect on compression and decompression speed. We first describe the benchmark datasets and methodology. We then present an ablation study that isolates the contribution of each optimization. Finally, we analyze the runtime impact of the proposed changes.

\sparagraph{Setup.} We conduct the experiments on a single node Intel\textsuperscript{\textregistered} Xeon\textsuperscript{\textregistered} Gold 5318Y CPU (24 cores, 48 hyper-threads). The machine is equipped with 128GB DDR4 main memory and runs Ubuntu 24.04. All experiments are run single-threaded.

\subsection{Benchmark Datasets}
\label{subsec:evaluation-datasets}

We evaluate \optfsst{} on heterogeneous real-world string datasets. The goal is to cover different classes of string columns, including natural language text, URLs, identifiers, semi-structured strings, and short categorical strings. We use the following datasets:

\begin{itemize}
  \item \textbf{dbtext} (23 files): the database string corpus introduced with \fsst{}, containing representative string columns from database workloads~\cite{cwida2020fsst}.
  \item \textbf{NextiaJD} (22 files): a collection of real-world datasets used for data discovery experiments, containing heterogeneous string columns such as URLs, comments, and structured identifiers~\cite{nextiajd2021}.
  \item \textbf{Public BI Benchmark} (31 files): a business-intelligence benchmark containing realistic analytical columns, including textual and categorical attributes~\cite{ghita2019publicbi}.
  \item \textbf{CyclicJoinBench} (13 files): a benchmark for join-heavy analytical workloads that also contains string-typed columns~\cite{cyclicjoinbench}.
  \item \textbf{ClickBench} (3 files): a clickstream analytics benchmark with web-oriented string columns such as URLs, referrers, user-agent strings, titles, and categorical attributes~\cite{clickbench2022}.
\end{itemize}

We focus on string columns for which \fsst{} is meaningful. Columns that are compressed better using dictionary encoding are filtered out. We also require a sufficiently large number of strings per column (at least 1000) to obtain stable compression and runtime measurements. Each configuration is evaluated on the same set of columns and compared against the corresponding baseline, \fsst{} or \fsstxx{}.

We report compression-factor improvement as a multiplicative ratio over the baseline:
\[
    \text{improvement} =
    \frac{\text{compression factor of variant}}
         {\text{compression factor of baseline}}.
\]
Values above $1.0$ therefore indicate an improvement over the baseline, while values below $1.0$ indicate a regression. Runtime results are reported in MB/s. For compression speed, we measure only the final corpus compression after the symbol table has already been constructed. Table construction speed is measured separately, since the \fsst{} SIMD version does not affect this phase.

\begin{figure*}[t]
    \centering

    \begin{subfigure}[t]{0.32\textwidth}
        \centering
        \includegraphics[width=\linewidth]{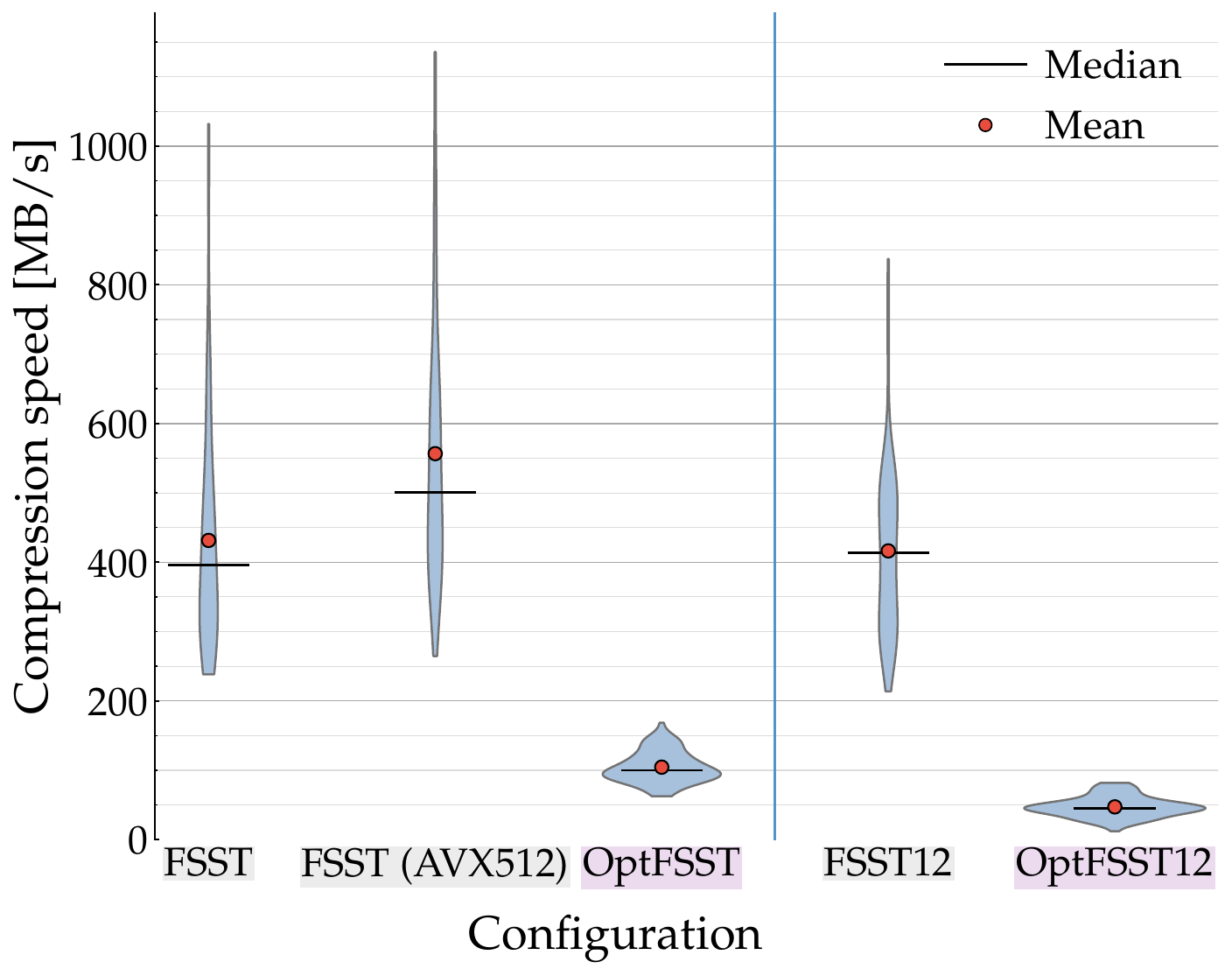}
        \caption{Final corpus compression speed [MB/s]}
        \label{fig:compression-speed}
    \end{subfigure}
    \hfill
    \begin{subfigure}[t]{0.32\textwidth}
        \centering
        \includegraphics[
            width=\linewidth,
        ]{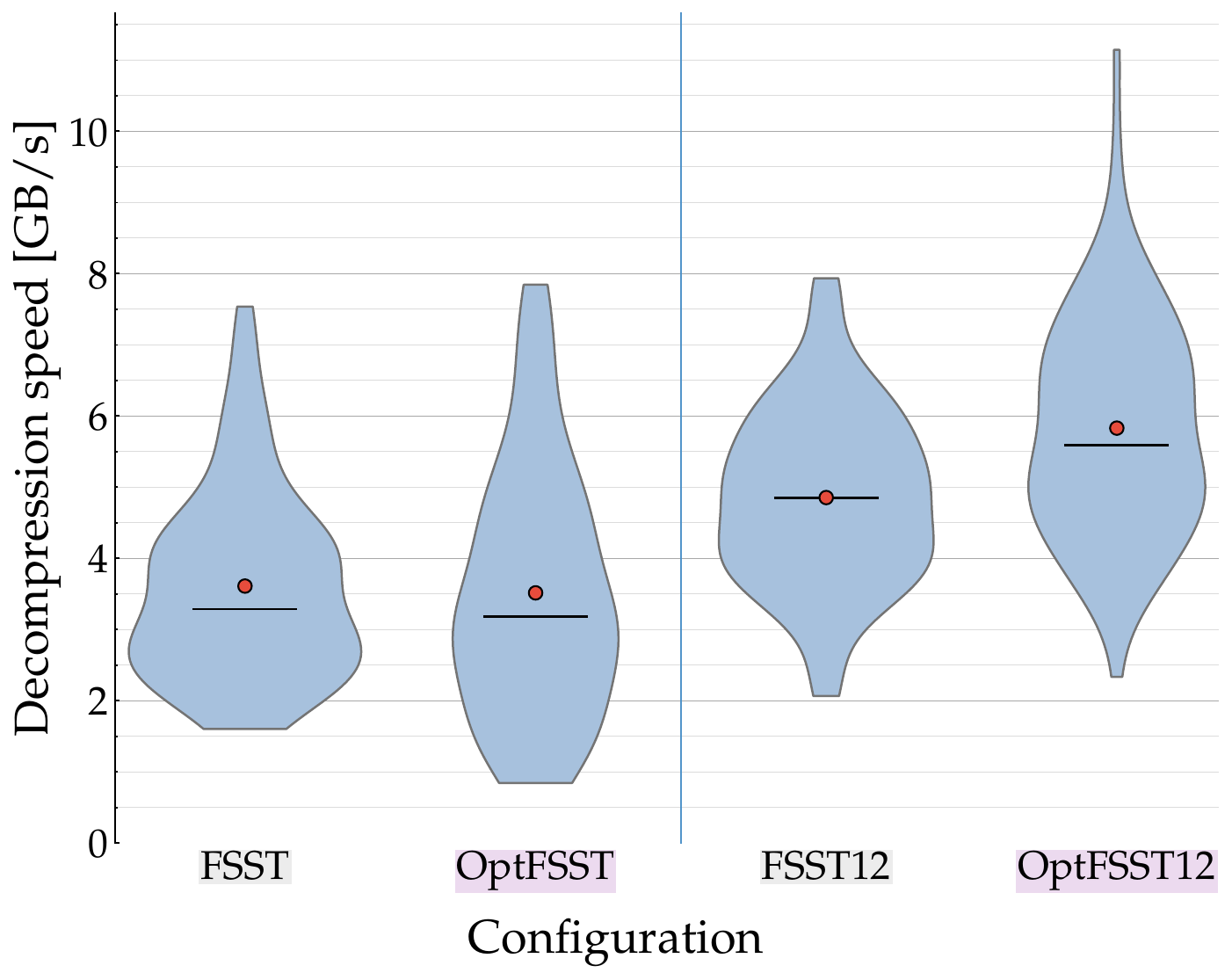}
        \caption{Decompression speed [GB/s]}
        \label{fig:decompression-speed}
    \end{subfigure}
    \hfill
    \begin{subfigure}[t]{0.32\textwidth}
        \centering
        \includegraphics[width=\linewidth]{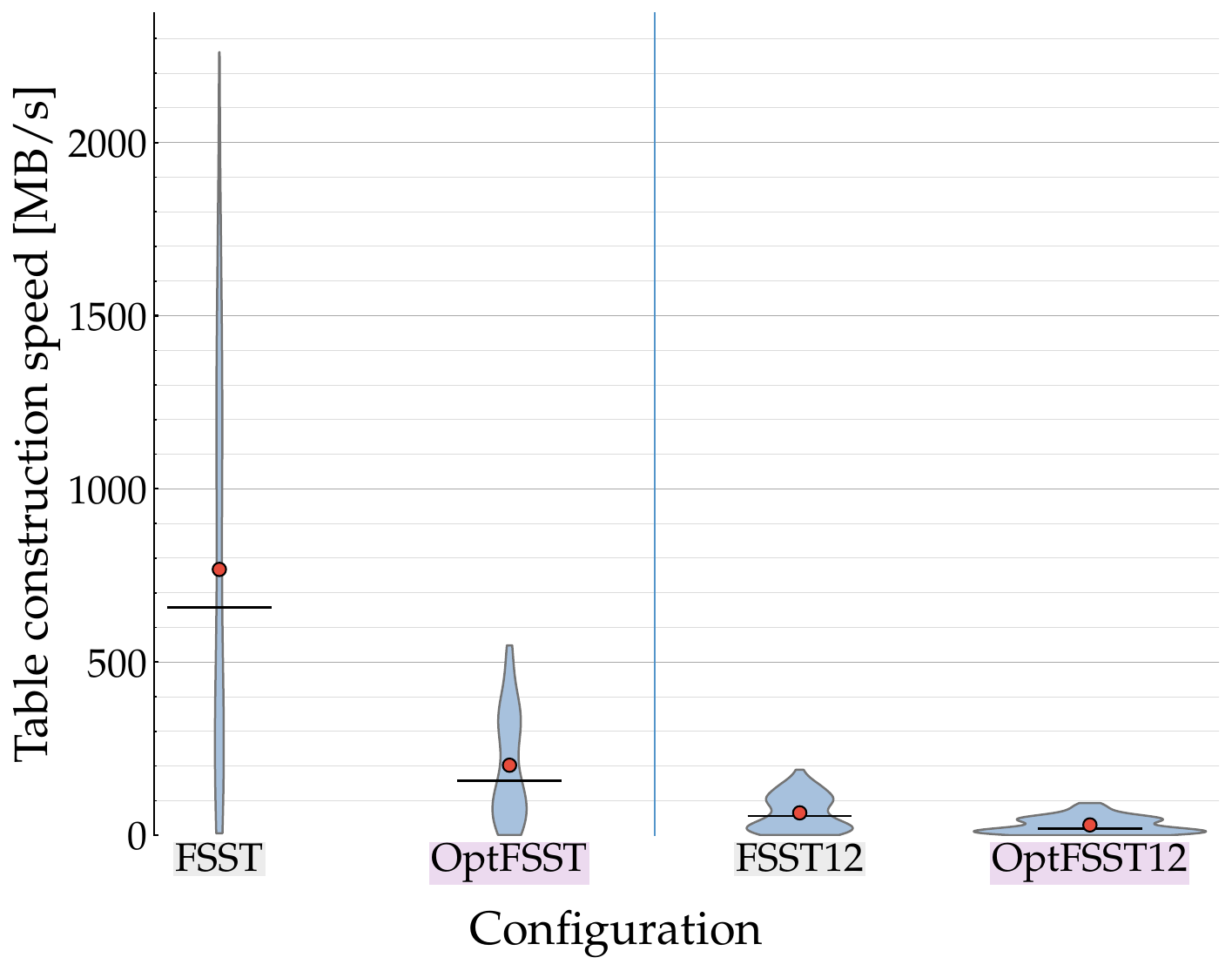}
        \caption{Symbol-table construction speed [MB/s]}
        \label{fig:table-construction-speed}
    \end{subfigure}

    \caption{Runtime comparison. Compression speed measures only final corpus compression after the symbol table has been constructed. Table construction is reported separately because \optfsst{} performs additional optimization during this phase.}
    \label{fig:speed-results}
\end{figure*}

\subsection{Compression Factor Improvement}
\label{subsec:compression-factor-improvement}

\sparagraph{OptFSST.}
Figure~\ref{fig:cf-improvement-optfsst} shows the ablation study for \optfsst{}. The first configuration, \emph{\gray{FSST} + \red{dp-train}}, uses dynamic programming during symbol-table construction. This already improves compression factors by $1.02\times$ on average. The effect is moderate because only the training signal changes: the final corpus is still encoded with the baseline greedy encoder. The \emph{+ \orange{triples}} configuration adds the third frequency counter and reaches an average improvement of $1.03\times$. This shows that exposing longer recurring symbol combinations earlier during table construction helps the algorithm discover more useful symbols.

The \emph{+ \green{prune}} configuration adds pruning during candidate selection. In terms of the aggregate mean, pruning has no noticeable effect on the final compression factor. However, the aggregate mean hides the fact that pruning is beneficial for a majority of columns.
For \optfsst{}, pruning improves the final compression factor for $56.5\%$ of the files, with the best case reaching a $1.05\times$ improvement for the sparse column \texttt{HashTags\_1::null\_21} from the \texttt{Public BI Benchmark} dataset. Without text DP \pblue{encoding}, individual columns still benefit substantially, with the best case reaching $1.113\times$. Thus, \green{pruning} is not the main source of the headline average improvement, but it consistently helps in the cases where overlapping candidates would otherwise lead to stale or overestimated gains.

The \emph{\gray{FSST} + \pblue{dp-encode}} configuration applies dynamic programming to the final corpus \pblue{encoding} and reaches $1.01\times$ on average. This confirms that even for a fixed symbol table, greedy longest-match encoding can leave compression opportunities unused.

The final \purple{\optfsst{}} configuration combines DP \red{training}, \orange{triple} counting, dynamic-programming \pblue{encoding}, and \green{pruning}. It reaches an average compression-factor improvement of $7.3\%$ over \fsst{}. \optfsst{} improves the compression factor over \fsst{} on $82/92$ benchmark columns ($89\%$). The main aggregate gain comes from the two dynamic-programming uses and from the additional triple counter.

The violin plot also shows that the improvement is not uniform across all columns. Some columns remain close to parity, while others benefit substantially more, with outliers reaching up to $47.7\%$ improvement, which is observed in a column with short strings of length 2. This variation is expected because \optfsst{} mainly improves cases where symbol overlap and greedy encoding decisions matter. Columns that are already well represented by \fsst{}'s greedy table construction leave less room for improvement.

\sparagraph{OptFSST12.}
Figure~\ref{fig:cf-improvement-optfsst12} repeats the ablation study for \fsstxx{}. The results show that the same ideas remain effective even when the baseline has a larger symbol space. However, the individual contributions behave differently. The \emph{\gray{FSST12} + \red{dp-train}} configuration is slightly below parity on average, at $0.98\times$. This indicates that changing only the training encoder can sometimes disturb \fsstxx{}'s symbol selection when the final encoding remains unchanged. In contrast, the \emph{+ \orange{triples}} configuration improves the average to $1.04\times$, showing that longer candidate discovery is still useful in the larger-symbol-space setting.

For \fsstxx{}, \green{pruning} again has no noticeable aggregate effect on the final compression factor. As with \optfsst{}, this neutral mean should not be interpreted as pruning being unused. Pruning improves the final compression factor for $54.3\%$ of the files, and the best final-encoding case reaches $1.03\times$. Without text DP \pblue{encoding}, the effect is even more visible for \fsstxx{}: \green{pruning} improves the compression factor for $67.4\%$ of the files, with a best case of $1.19\times$ on the column \texttt{c\_name} from the \texttt{dbtext} dataset. This suggests that the larger symbol space creates more opportunities for overlapping candidates to compete, making stale gain correction useful even when the final aggregate mean remains unchanged. We therefore keep \green{pruning} in the final configuration as a lightweight correction for stale gain estimates caused by overlapping candidates, rather than as a mechanism that is expected to increase the aggregate mean on every workload.

The final \purple{\optfsstxx{}} configuration combines all previously mentioned contributions and achieves an average compression-factor improvement of $1.17\times$ over \gray{\fsstxx{}}. \optfsstxx{} improves over \fsstxx{} on $87/92$ columns ($95\%$). These numbers are larger than the relative improvement observed for \optfsst{} over \fsst{}, suggesting that \fsstxx{}'s larger symbol space creates more opportunities for improved candidate selection.

Overall, the ablation study shows that dynamic programming and triple counting are the main sources of compression-factor improvement. Dynamic-programming encoding improves the final segmentation decisions, while the third counter exposes longer recurring candidates earlier during table construction. Pruning serves as a correction mechanism that proved useful for most files.

\begin{table}[t]
\centering
\small
\caption{Runtime ablation. Values are arithmetic-mean slowdowns ($\times$) over the
corresponding \fsst{}/\fsstxx{} baseline. E2E includes both table construction and corpus encoding.}
\label{tab:runtime-ablation}
\begin{tabular}{lrrrr}
\toprule
& \multicolumn{2}{c}{\optfsst{}} & \multicolumn{2}{c}{\optfsstxx{}} \\
\cmidrule(lr){2-3}\cmidrule(lr){4-5}
Configuration & Symbol table & E2E & Symbol table & E2E \\
\midrule
Baseline          & 1.00 & 1.00 & 1.00 & 1.00 \\
+ \red{DP train}  & 2.49 & 1.68 & 1.51 & 1.45 \\
+ \orange{triples}& 4.02 & 2.36 & 2.22 & 2.07 \\
+ \green{prune}   & 4.07 & 2.38 & 2.28 & 2.12 \\
Baseline + \pblue{DP encode} & n/a & 2.33 & n/a & 1.84 \\
All configurations & 4.07 & 3.82 & 2.16 & 3.02 \\
\bottomrule
\end{tabular}
\end{table}

\begin{table*}[t]
    \centering
    \caption{Comparison of FSST-style and \pink{block-based} compressors.}
    \label{tab:block-compressors-combined}
    \begin{tabular}{l|rrrr|rrrr|rrrr}
        \toprule
        \multirow{3}{*}{Metric}
        & \multicolumn{4}{c|}{Compression factor [$\times$]}
        & \multicolumn{4}{c|}{Compression speed [MB/s]}
        & \multicolumn{4}{c}{Decompression speed [GB/s]} \\
        \cmidrule(lr){2-5}
        \cmidrule(lr){6-9}
        \cmidrule(lr){10-13}
        & min & mean & median & max
        & min & mean & median & max
        & min & mean & median & max \\
        \midrule

        \gray{FSST}
        & 1.17 & 2.68  & 2.33 & 6.54
        & 238.36 & 431.93 & 395.55 & 1032.01
        & 1.55 & 3.60 & 3.28 & 7.22 \\
        
        \gray{FSST (AVX512)}
        & 1.17 & 2.68  & 2.33 & 6.54
        & 264.66 & 556.56 & 500.69 & 1135.60 
        & 1.55 & 3.60 & 3.28 & 7.22 \\
        
        \purple{OptFSST}
        & 1.21 & 2.93  & 2.48 & 7.25
        & 61.28  & 104.93 & 99.30  & 166.78
        & 0.85 & 3.54 & 3.18 & 7.96 \\
        
        \midrule
        
        \gray{FSST12}
        & 0.97 & 2.29  & 2.31 & 4.46
        & 213.84 & 416.77 & 409.77 & 849.51
        & 1.99 & 4.91 & 4.89 & 8.40 \\
        
        \purple{OptFSST12}
        & 0.39 & 2.72  & 2.60 & 5.13
        & 15.23  & 47.48  & 45.59  & 82.22
        & 2.55 & 5.86 & 5.87 & 9.16 \\

        \midrule
        
        \pink{Snappy}
        & 1.00 & 3.62  & 2.45 & 19.24
        & 49.69  & 272.59 & 244.40 & 714.27
        & 0.11 & 0.57 & 0.48 & 1.71 \\
        
        \pink{LZ4}
        & 1.00 & 5.81  & 2.36 & 87.63
        & 76.33  & 448.88 & 335.26 & 1989.84
        & 0.50 & 1.61 & 1.26 & 6.00 \\
        
        \pink{ZSTD}
        & 1.55 & 10.83 & 4.02 & 174.34
        & 14.38  & 164.18 & 137.29 & 681.75
        & 0.07 & 0.61 & 0.50 & 2.75 \\
        
        \bottomrule
    \end{tabular}
\end{table*}

\subsection{Effect on (De)Compression Speed}
\label{subsec:speed-results}

\sparagraph{Runtime Ablation.} Table~\ref{tab:runtime-ablation} complements the compression-factor ablation in \secref{subsec:compression-factor-improvement}
by separating table-construction overhead from end-to-end compression time; we report the slowdown over the corresponding baseline. Note that the rows are cumulative for the training-side variants, i.e., \emph{+ \orange{triples}}
includes DP-based \red{training}, and \emph{+ \green{prune}} includes both DP-based training
and triple counting. In the case of full-text \pblue{DP encoding} only, the table-construction time remains unchanged. We observe that, for table construction, DP-based \red{training} and \orange{triple} counting are the main sources of overhead, while \green{pruning} has no noticeable effect. The relative construction overhead is smaller for \fsstxx{} because its baseline table construction is already more expensive due to the larger symbol space.

\sparagraph{\optfsst{}.} Next, we analyze the throughput of \optfsst{}'s compression, decompression, and symbol-table construction.

\paragraph{Compression.} Figure~\ref{fig:compression-speed} shows that \purple{\optfsst{}} reduces final corpus compression speed compared with \gray{\fsst{}}. \fsst{} compresses at $431.9$~MB/s on average, while \optfsst{} reaches $104.9$~MB/s.\footnote{Note that a ``natural'' slowdown would be 8x, as the inner DP loop has to run on all 8 possible lengths. However, the trie, as explained in~\secref{subsec:dp-encoding}, allows exiting non-matching paths faster on average.} The \fsst{} SIMD version is around $1.3\times$ faster than the regular \fsst{}, which increases the gap. This slowdown is expected because \purple{\optfsst{}} replaces the branchless greedy encoder with a dynamic programming \pblue{encoder} that evaluates multiple candidate symbols per input position. Notably, the optimized trie traversal, including loop unrolling and compiler branch-prediction hints, reduces this overhead, as shown in Figure~\ref{fig:trie-opt}: the corpus encoding speed is improved by these optimizations by $1.03\times$ and $1.06\times$ on average in \purple{\optfsst{}} and \purple{\optfsstxx{}}, respectively. This helps further close the gap to FSST's compression performance.

\begin{figure}[t]
    \centering
    \includegraphics[width=0.75\linewidth]{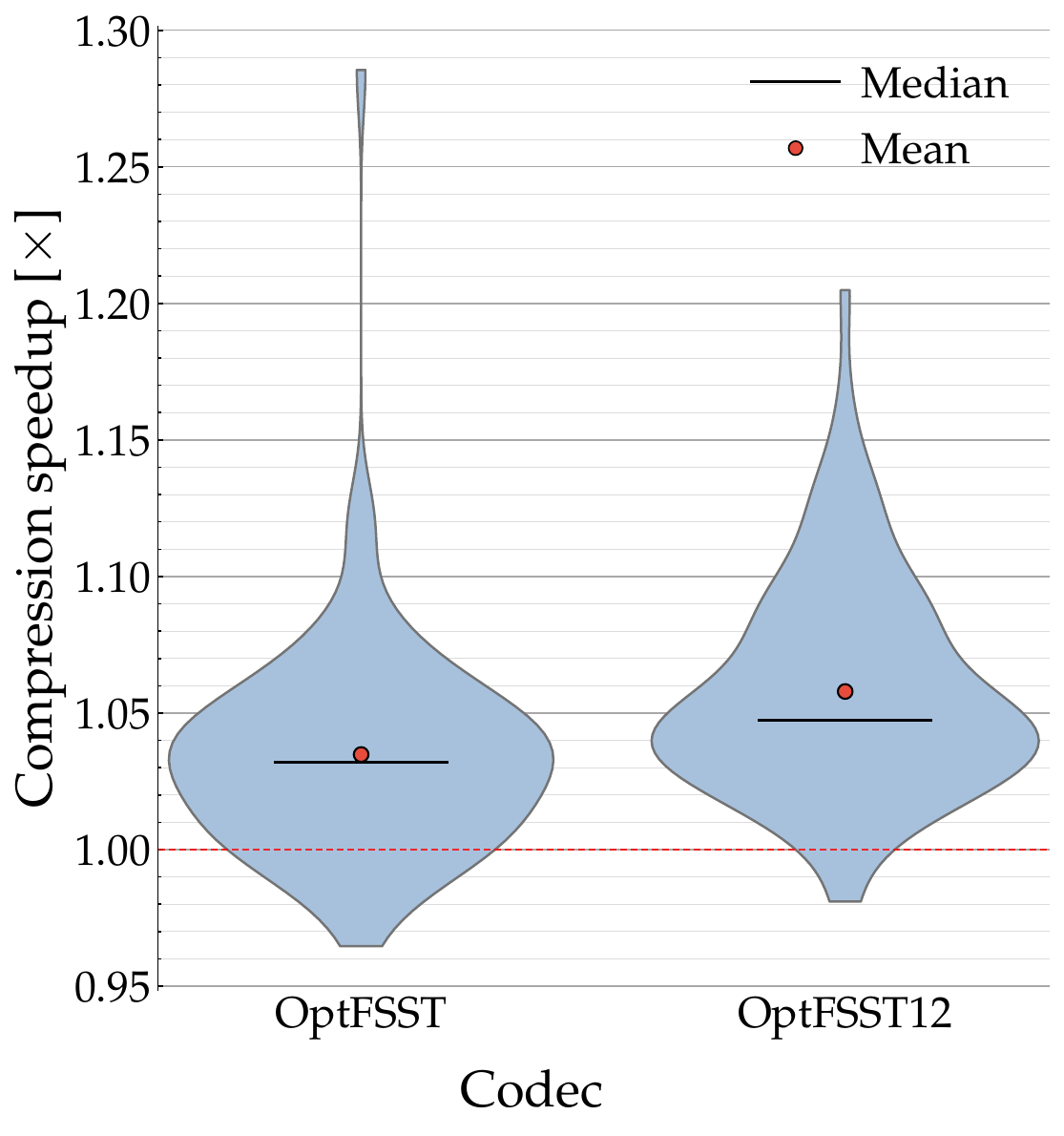}
    \caption{Speedup of corpus compression after the trie traversal optimizations---loop unrolling and branch-prediction hints---in both \optfsst{} and \optfsstxx{}.}
    \label{fig:trie-opt}
\end{figure}

\paragraph{Decompression.} Figure~\ref{fig:decompression-speed} shows that decompression is essentially unaffected. \fsst{} reaches $3.6$~GB/s on average, while \optfsst{} reaches $3.5$~GB/s. Since \optfsst{} does not change the compressed representation or the decompression algorithm, decompression remains a simple symbol-table lookup per emitted code. Small differences in the measured decompression speed are therefore mainly caused by memory behavior and branch prediction, as the placement of escape codes in the compressed text affects the behavior of the decompression function in \fsst{}.

\paragraph{Table Construction.} Figure~\ref{fig:table-construction-speed} shows that during symbol-table construction, baseline \fsst{} constructs tables at $766.8$~MB/s on average. \optfsst{} constructs tables at $200.7$~MB/s. As shown in Table~\ref{tab:runtime-ablation}, the decrease is caused mainly by dynamic-programming \red{training} and \orange{triple} counting. \green{Pruning} has no noticeable effect on construction time, since it operates on the candidate heap after frequency counting. Nevertheless, table construction remains practical, especially because it is performed once per block, row group, or compression unit, while decompression is usually on the critical path for query execution.

\sparagraph{OptFSST12.} Table construction for \gray{\fsstxx{}} is substantially slower than \fsst{}, reaching $64.4$~MB/s on average in our measurements. \purple{\optfsstxx{}} further reduces table construction speed to $31.6$~MB/s. This overhead is expected because \optfsstxx{} applies the same additional optimization steps as \optfsst{}, but in the setting of a larger symbol space. The larger candidate space increases the cost of construction, and the dynamic-programming and pruning steps add further work.

For final compression speed, \purple{\optfsstxx{}} follows the same trend as \optfsst{}: the dynamic-programming \pblue{encoder} is slower than the greedy baseline because it computes globally optimal suffix costs instead of emitting the first longest match. However, this cost is paid during compression, not decompression. Unlike \optfsst{}, \purple{\optfsstxx{}} noticeably improved the average decompression speed of \gray{\fsstxx{}} by $1.2\times$. This is due to the higher improvement of the compression factor, and thus smaller compressed size, combined with the simpler decompression design of \fsstxx{} that does not contain branching caused by escape codes like in \fsst{}. Thus, \optfsstxx{} trades additional compression and construction effort for improved compression factors and decompression speed.

\sparagraph{Trade-off and Hardware Implications.} The runtime results quantify the intended trade-off: \optfsst{} and \optfsstxx{} spend additional CPU time during compression and table construction to produce a more compact representation that is later scanned, cached, and decoded. This is attractive in analytical systems because compression is typically paid during
ingest, compaction, or row-group creation, whereas compressed columns may be
read many times during query processing. On modern CPUs, analytical scans are
often constrained by memory traffic and cache residency. Reducing the string
payload can therefore improve the amount of useful data per cache line and
lower memory-bandwidth pressure. The branch-heavy DP work is placed on the less
frequent compression side, while the frequently executed decompression side
continues to use the simple static-table decoding path. 

\subsection{Comparison with Block-Based Compressors}
\label{subsec:block-compressors}

\sparagraph{Compression Factor.} Table~\ref{tab:block-compressors-combined} compares the FSST-based compressors with the block-based compressors Snappy, LZ4, and ZSTD. The block-based compressors achieve higher compression factors on average because they can exploit redundancy across larger byte ranges. Snappy reaches $3.62\times$ on average, which is $1.2\times$ higher than \optfsst{}, whereas LZ4 and ZSTD achieve much larger average compression factors, which are $1.9\times$ and $3.6\times$ higher than that of \optfsst{}, respectively. This illustrates the main design trade-off: block-based compressors obtain better compression by using a larger compression context, while FSST-style compressors sacrifice some compression effectiveness to preserve fine-grained random access and independent decompression of individual strings.

The minimum compression factor of \optfsstxx{}, 0.39$\times$, is an outlier on
the column \texttt{place/name} from the \texttt{CyclicJoinBench (SNB1)} dataset. This column is small (13.6~KB) and consists almost entirely of distinct (only one duplicate) short strings (8.2 bytes on average). DP training fills the entire 4096-entry symbol table: Its serialization alone exceeds twice the input size and dominates the reported compression factor, while \fsstxx{} constructs a smaller table on the same column.
As a result, \optfsstxx{} can lose on this column even though its mean and median compression factors remain higher than \fsstxx{}.

\sparagraph{Compression and Decompression Speed.} The speed results in Table~\ref{tab:block-compressors-combined} show the opposite side of this trade-off. During compression, the AVX512 version of \fsst{} is the fastest method on average, followed closely by LZ4 and \fsstxx{}. In contrast, \optfsst{} and \optfsstxx{} are slower due to their dynamic-programming encoder. Compared with LZ4, \optfsst{} is $4.2\times$ slower and \optfsstxx{} is $9.4\times$ slower on average. The improved compression factors of \optfsst{}(12) therefore come at a clear compression-time cost.

However, decompression shows why the FSST-based design remains attractive for analytical workloads. On average, \optfsstxx{} decompresses at $5.86$~GB/s, which is $3.6\times$ faster than LZ4, $10.2\times$ faster than Snappy, and $9.6\times$ faster than ZSTD.

Overall, the comparison shows that \optfsst{} and \optfsstxx{} occupy a different point in the design space than general-purpose block compressors. They do not compete with ZSTD or LZ4 on maximum compression factor, and they are slower during compression than the fastest baselines. Instead, they improve the compression effectiveness of FSST-style random-access string compression, thereby improving its high decompression throughput.
\section{Conclusion \& Future Work}
\label{sec:conclusion}
This paper presented \optfsst{}, an optimized \fsst{} variant that improves
compression effectiveness while preserving \fsst{}'s static-table format,
independent decompression, and random-access interface. The key idea is to
spend additional work during compression and table construction, through
DP-based encoding, triple counting, and pruning, while leaving the
decompression-side abstraction unchanged. Our evaluation shows that this
trade-off improves compression factors by 7.3\% over FSST and 17.0\% over
\fsstxx{} on average, with decompression preserved for \optfsst{} and improved for
\optfsstxx{}.

The ablation study confirms that the three contributions are complementary: dynamic-programming-based encoding improves final compression decisions, triple counting accelerates the discovery of longer useful symbols, and pruning reduces wasted symbol-table capacity. 

\sparagraph{Future Work.} Future work should focus on closing the remaining compression-speed gap to \fsst{}. The most immediate direction is to further optimize \texttt{BuildDP}, especially the trie traversal and candidate lookup logic, since this is the dominant added cost of dynamic-programming-based encoding. A second direction is to explore more adaptive variants of the proposed techniques, for example enabling pruning or triple counting only when the sampled data suggests that they are beneficial.

\bibliographystyle{ACM-Reference-Format}
\bibliography{optfsst}

\appendix

\section{STS$_\ell$ NP-hardness}\label{appendix:np-hardness}

\begin{figure*}[t]
	\centering
	\begin{tikzpicture}[
		>=Latex,
		font=\small,
		vL/.style={circle,draw=blue!70!black,thick,fill=blue!8,minimum size=9mm},
		vR/.style={circle,draw=red!70!black,thick,fill=red!8,minimum size=9mm},
		byte/.style={
			rectangle,
			draw,
			thick,
			rounded corners=2pt,
			minimum width=11mm,
			minimum height=9mm,
			inner sep=1pt,
			text height=1.7ex,
			text depth=.5ex
		},
		byteL/.style={byte, fill=blue!8, draw=blue!70!black},
		byteX/.style={byte, fill=gray!10},
		byteR/.style={byte, fill=red!8,draw=red!70!black},
		phrase/.style={
			rectangle,
			draw=green!50!black,
			thick,
			rounded corners=3pt,
			fill=green!10,
			minimum width=23mm,
			minimum height=9mm,
			inner sep=1pt,
		},
		title/.style={font=\bfseries\large},
		lbl/.style={font=\small},
		greenlbl/.style={font=\bfseries\color{green!50!black}}
		]
		
		
		\node[title] at (0,2.7) {graph edge};
		
		\node[vL, ultra thick] (u) at (-1.0,1.2) {$u$};
		\node[vR]              (v) at ( 1.0,1.2) {$v$};
		
		\draw[very thick] (u) -- (v);
		
		\node[lbl] at (-1.0,0.35) {$u\in U$};
		\node[lbl] at ( 1.0,0.35) {$v\notin U$};
		
		\node[greenlbl] at (-1,-.1) {$\Updownarrow$};
		\node[greenlbl] (corr) at (-.75,-0.6) {$\; x_u z^{(\ell -1)}\in D$};
		
		
		\draw[->,very thick] (1.85,1.2) -- (3.15,1.2)
		node[midway,above] {becomes};
		
		
		\node[title] at (5.2,2.7) {edge gadget};
		
		\node[byteL] (ga) at (4.0,1.2) {$x_u$};
		\node[byteX] (gx) at (5.2,1.2) {$z^{(\ell -1)}$};
		\node[byteR] (gb) at (6.4,1.2) {$y_v$};
		
		
		
		\node[title] at (9.9,2.7) {encoding cost};
		
		\node[byteL] (ua) at (8.8,1.75) {$x_u$};
		\node[byteX] (ux) at (10,1.75) {$z^{(\ell -1)}$};
		\node[byteR] (ub) at (11.2,1.75) {$y_v$};
		
		\node[lbl] at (8.8,1.02) {$1$};
		\node[lbl] at (10,1.02) {$1$};
		\node[lbl] at (11.2,1.02) {$1$};
		
		\node[lbl, anchor=west] at (12.05,2.05) {uncovered};
		\node[title, anchor=west] at (12.15,1.65) {cost: $3$};
		
		\node[phrase] (pa) at (9.4,0.20) {$x_u z^{(\ell -1)}$};
		\node[byteR]  (pb) at (11.2,0.20) {$y_v$};
		
		\node[lbl] at (9.4,-0.62) {$1$};
		\node[lbl] at (11.2,-0.62) {$1$};
		
		\node[lbl, anchor=west] at (12.175,0.4) {covered};
		\node[title, anchor=west] at (12.15,0) {cost: $2$};
		
		\node[greenlbl] at (10.4,-1.05) {covered edge saves exactly $1$};
		
		
        \draw[
            ->,
            semithick,
            green!60!black,
            shorten >=4pt,
            shorten <=4pt
        ]
        (corr.east) -- (pa.west);
    \end{tikzpicture}
	
	\caption{
		In the reduction, each graph edge $(u,v)$ becomes the string gadget $x_u z^{(\ell -1)} y_v$.
		If $u\in U$, we add the length-$\ell$ string $x_u z^{(\ell -1)}$ to the dictionary $D$ (and similarly $v\in U \Leftrightarrow z^{(\ell -1)}y_v\in D$), which lowers the encoding cost of the edge gadget from $3$ to $2$.
	}
	\label{fig:reduction_stsell}
\end{figure*}

\MAINTHM*
\begin{proof}
	We reduce from the NP-hard decision version of Maximum Vertex Coverage on bipartite graphs (VCB) \cite{pvcb-2}.
	Given a bipartite graph $G=(L\cup R,E)$ and positive integers $K$ and $q$, decide whether there is a set $U\subseteq L\cup R$ with $|U|\leq K$ that covers at least $q$ edges, i.e., at least $q$ edges have at least one endpoint in $U$.
	
	Let $m\vcentcolon=|E|$ and let $\ell\geq 2$ be fixed.
	We construct an instance of $\mathrm{STS}_\ell$.
    
	Introduce one character $x_u$ for every $u\in L$, one character $y_v$ for every $v\in R$, and one extra character $z$.
	For every edge $(u,v)\in E$, we create an edge gadget $x_u z^{(\ell -1)} y_v$ of length $\ell+1$.
	Next we define
    \[
    C\vcentcolon=\{z^{(\ell -1)}\}\cup \{x_u: u\in L\}\cup \{y_v: v\in R\}.
    \]
	The multiset $\mathcal{T}$ consists of all edge gadgets, and additionally $m+1$ copies of each string $c\in C$ that we refer to as \emph{copy strings}.
    
	Finally, we set $K'\vcentcolon=|C|+K$ and $B'\vcentcolon=(m+1)|C|+3m-q$.
	We show that $(G,K,q)$ is a yes-instance of VCB if and only if $(\mathcal{T}, K',B')$ is a yes-instance of STS$_\ell$.
	
	\noindent\textbf{Forward direction.}
	Suppose there is a set $U\subseteq L\cup R$ with $|U|\leq K$ covering $r\geq q$ edges.
	Define
	\[
	D\vcentcolon=C\cup \{x_u z^{(\ell -1)}: u\in L\cap U\}\cup \{z^{(\ell -1)} y_v: v\in R\cap U\}
	\]
	with $|D|= |C|+|U|\leq |C|+K=K'$.
	Each of the $(m+1)|C|$ copy strings is itself contained in $D$ and can be encoded as a dictionary symbol at cost $1$.
	Hence the copy strings contribute a total cost of $(m+1)|C|$.
    
	Next, consider the edge gadget $x_u z^{(\ell -1)} y_v$.
	If $u\in U$, then it can be encoded as $x_u z^{(\ell -1)}$ and $y_v$ with cost $2$.
	Similarly, if $v\in U$, it can also be encoded with cost $2$.
	If neither $u$ nor $v$ are in $U$, the edge gadget can only be encoded as $x_u$, $z^{(\ell -1)}$, and $y_v$ with cost $3$.
	Therefore the $r$ covered edges contribute cost $2r$, and the remaining $m-r$ edges contribute cost $3(m-r)$.
    
	Thus,
	\begin{align*}
	\text{cost}_D(\mathcal{T})
    &=(m+1)|C|+2r+3(m-r)\\
    &=(m+1)|C|+3m-r\\
    &\leq (m+1)|C|+3m-q=B',
	\end{align*}
	so the constructed STS$_\ell$ instance is a yes-instance.
    See Figure~\ref{fig:reduction_stsell} for an illustration.
	
	\noindent\textbf{Reverse direction.}
	Suppose there is a dictionary $D$ with $|D|\leq K'$ and $\text{cost}_D(\mathcal{T})\leq B'$.
    
	First, we show that $C\subseteq D$.
	For contradiction, assume that some $c\in C$ is not in $D$.
	Then, each of the $(m+1)$ copies of $c$ in $\mathcal{T}$ must be encoded with a cost of at least $2$ instead of cost $1$.
	Thus, the copy strings contribute cost at least $(m+1)|C|+(m+1)$.
    
	Moreover, every edge gadget has length $\ell+1$, while dictionary symbols have length at most $\ell$.
	Therefore every edge gadget must be encoded with cost at least $2$.
	Hence the total cost is at least
	\begin{align*}
	\text{cost}_D(\mathcal{T})\geq&\,(m+1)|C|+(m+1)+2m\\
    =&\,(m+1)|C|+3m+1\\
    >&\,(m+1)|C|+3m-q=B',
	\end{align*}
	contradicting the assumption that $\text{cost}_D(\mathcal{T})\leq B'$.
	Thus $C\subseteq D$.
	
	Since $C\subseteq D$ and $|D|\leq |C|+K$, there are at most $K$ dictionary symbols not contained in $C$.
	As noted above, no edge gadget $x_u z^{(\ell -1)} y_v$ can be encoded with cost less than $2$.
	Moreover, the copy strings contribute cost at least $(m+1)|C|$.
    
	Hence
    \[
    \text{cost}_D(\mathcal{T})\leq B'=(m+1)|C|+3m-q
    \] implies that the total cost of edge gadgets is at most $3m-q$, and therefore at least $q$ edge gadgets must have cost exactly $2$.
	
	Every edge gadget $x_u z^{(\ell -1)} y_v$ with cost $2$ is split into two dictionary symbols $x_u z^{(t)}$ and $z^{(\ell-1-t)}y_v$ for some $t\in \{0,\dots,\ell-1\}$.
	At least one of these symbols is not in $C$, since $\ell\geq 2$.
	Define
	\begin{align*}
	U\vcentcolon=\,&\,
    \{u\in L: x_u z^{(t)}\in D\text{ for some }t\geq 1\}\\
    &\,\cup\, \{v\in R: z^{(t)}y_v\in D\text{ for some }t\geq 1\}.
	\end{align*}
	Each element in $D\setminus C$ contributes to at most one vertex of $U$, therefore $|U|\leq K$.
	Moreover, every edge gadget with cost $2$ has at least one endpoint in $U$.
	Since at least $q$ edge gadgets have cost $2$, the set $U$ covers at least $q$ edges.
    
	Hence the original VCB instance is a yes-instance.
	This concludes the proof.
\end{proof}

\section{STS$_\ell$ with fixed alphabet size }\label{appendix:STS_l_with_bounded_alphabet_size_in_P}
\begin{theorem}
    STS$_\ell$ can be solved in polynomial time for every fixed $\ell$ and fixed alphabet size $n\vcentcolon= |\Sigma|$.
\end{theorem}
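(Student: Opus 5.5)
The key observation is that when both $\ell$ and $n=|\Sigma|$ are fixed, the universe of admissible dictionary symbols is a constant-size set: every symbol is a string over $\Sigma$ of length between $1$ and $\ell$, so there are at most
\[
N \vcentcolon= \sum_{k=1}^{\ell} n^k \le \ell\, n^{\ell}
\]
candidate symbols. Since $\ell$ and $n$ are constants, $N$ is a constant as well. Hence the number of possible dictionaries $D$ (subsets of this universe) is at most $2^{N}$, again a constant independent of the instance $(\mathcal{T},K,B)$. The plan is therefore brute force: enumerate every subset $D$ of the candidate universe with $|D|\le K$ (if $K\ge N$ we simply allow all subsets, since a larger $K$ is never restrictive), evaluate $\cost_D(\mathcal{T})$ for each, and accept iff some $D$ achieves $\cost_D(\mathcal{T})\le B$.

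The second step is to argue that $\cost_D(\mathcal{T})$ is computable in polynomial time for a fixed $D$. Here I would invoke exactly the dynamic program of \secref{subsec:dp-encoding}: for each string $s\in\mathcal{T}$ of length $|s|$, set $dp[|s|]=0$ and $dp[i]=\min\bigl(2+dp[i+1],\,1+\min_{s[i:j]\in D,\, j-i\le \ell} dp[j]\bigr)$. Correctness is the standard optimal-substructure argument: any $D$-encoding of the suffix $s[i:]$ starts with either an escaped literal or a symbol occurrence, and the remainder is an encoding of a shorter suffix. Each position inspects at most $\ell$ candidate lengths, and membership in $D$ is checkable in $O(\ell N)$ time, so the cost of one string is $O(|s|\,\ell^2 N)$, and summing over $\mathcal{T}$ gives time linear in the total input length. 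Note that because the escape option always exists, $\cost_D(s)$ is finite for every $D$, so the DP is always well defined.

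Combining, the total running time is $O\bigl(2^{N}\cdot \ell^2 N\cdot \sum_{s\in\mathcal{T}}|s|\bigr)$, which is linear in the input size with a (large) constant depending only on $\ell$ and $n$. The only subtle point I expect is bookkeeping rather than a real obstacle: one must make sure the encoding of $\mathcal{T}$ as a multiset is handled correctly (if multiplicities were given in binary, we would instead compute $\cost_D$ once per distinct string and multiply by its multiplicity, still polynomial), and that $K$ and $B$ being large numbers does not affect the enumeration bound, which it does not since $|D|\le N$ always. This also explains why the reduction in Theorem~\ref{thm:np-hardness} needed an alphabet growing with the input.
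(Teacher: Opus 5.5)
Your proposal is correct and follows the same approach as the paper's proof. The paper likewise notes that the candidate universe $S=\bigcup_{i=1}^{\ell}\Sigma^i$ has constant size at most $\ell n^{\ell}$, enumerates all $2^{|S|}$ subsets $D$ with $|D|\le K$, and computes $\cost_D(\mathcal{T})$ for each by dynamic programming. Your extra points on the DP's correctness and on binary-encoded multiplicities are sound refinements that the paper leaves implicit.
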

\begin{proof}
Let $(\mathcal{T},K,B)$ be an instance over an alphabet $\Sigma$, where both $\ell$ and $n\vcentcolon= |\Sigma|$ are fixed.
Let $S\vcentcolon=\bigcup_{i=1}^{\ell}\Sigma^i$ be the set of all strings over $\Sigma$ of length at most $\ell$.
Since $|S|= \sum_{i=1}^\ell n^i\leq \ell n^\ell$ is constant, we can enumerate all $2^{|S|}$ subsets $D\subseteq S$ in constant time.
For each such subset $D$, we check whether $|D|\leq K$; if so, we compute  $\text{cost}_D(\mathcal{T})$ by dynamic programming in polynomial time.
The instance is a yes-instance if and only if some enumerated subset $D$ satisfies $|D|\leq K$ and
\[
\text{cost}_D(\mathcal{T})\leq B.
\]
Thus the problem can be solved in polynomial time.
\end{proof}

\end{document}